\documentclass[conference,10pt]{IEEEtran}
\IEEEoverridecommandlockouts

\makeatletter
\def\ps@headings{%
\def\@oddhead{\mbox{}\scriptsize\rightmark \hfil \thepage}%
\def\@evenhead{\scriptsize\thepage \hfil \leftmark\mbox{}}%
\def\@oddfoot{}%
\def\@evenfoot{}}
\makeatother \pagestyle{headings}

\usepackage{latexsym}
\usepackage{epsfig}
\usepackage{amsmath}
\usepackage{amsfonts}
\usepackage{amssymb}
\usepackage{wrapfig}
\usepackage{float}
\usepackage{bm}

\newtheorem{theorem}{Theorem}
\newtheorem{definition}{Definition}
\newtheorem{lemma}{Lemma}
\newtheorem{proposition}{Proposition}
\newtheorem{corollary}{Corollary}
\newtheorem{remark}{Remark}

\newenvironment{proof}[1][Proof]{\begin{trivlist}
\item[\hskip \labelsep {\bfseries #1}]}{\end{trivlist}}

\begin{document}

\title{Fast Mixing of Parallel Glauber Dynamics and Low-Delay CSMA Scheduling\thanks{Authors names appear in the alphabetical order
of their last names.}\thanks{This work is supported by MURI grant BAA 07-036.18, NSF Grants 07-21286,
05-19691, 03-25673, AFOSR Grant FA-9550-08-1-0432 and DTRA Grant HDTRA1-08-1-0016.}}

% author names and affiliations
% use a multiple column layout for up to three different
% affiliations
\author{\IEEEauthorblockN{Libin Jiang}
\IEEEauthorblockA{UC Berkeley\\
ljiang@eecs.berkeley.edu} \and \IEEEauthorblockN{Mathieu Leconte}
\IEEEauthorblockA{Technicolor Paris Lab\\
mathieu.leconte@technicolor.com} \and \IEEEauthorblockN{Jian Ni}
\IEEEauthorblockA{UIUC\\
jianni@illinois.edu} \and \IEEEauthorblockN{R. Srikant}
\IEEEauthorblockA{UIUC\\
rsrikant@illinois.edu} \and \IEEEauthorblockN{Jean Walrand}
\IEEEauthorblockA{UC Berkeley\\
wlr@eecs.berkeley.edu}}

% make the title area
\maketitle

\begin{abstract}

Glauber dynamics is a powerful tool to generate randomized, approximate solutions to combinatorially
difficult problems. It has been used to analyze and design distributed CSMA (Carrier Sense Multiple Access)
scheduling algorithms for multi-hop wireless networks. In this paper we derive bounds on the mixing time of a
generalization of Glauber dynamics where multiple links are allowed to update their states in parallel and
the fugacity of each link can be different. The results can be used to prove that the average queue length
(and hence, the delay) under the parallel Glauber dynamics based CSMA grows polynomially in the number of
links for wireless networks with bounded-degree interference graphs when the arrival rate lies in a fraction
of the capacity region. We also show that in specific network topologies, the low-delay capacity region can
be further improved.

\end{abstract}

\section{Introduction}

In wireless networks, the links (transmitter-receiver pairs) may not be able to transmit simultaneously due
to interference. A scheduling algorithm, or MAC (Medium Access Control) protocol, determines which links can
access the medium in each time instant so that no active links interfere with each other. Since many wireless
network applications today have stringent bandwidth and delay requirements but the resources (e.g., spectrum,
power) are often quite limited in a wireless setting, designing low-complexity scheduling algorithms to
achieve high throughput and low delay is of great importance.

It is well known that the queue-length based \emph{Maximum Weighted Scheduling} (MWS) algorithm is
\emph{throughput-optimal} \cite{taseph92}, meaning that it can stabilize the network queues for all arrival
rates in the capacity region of the network. However, MWS requires the network to select a max-weight
independent set in the interference graph in every time slot, which is NP-hard for general interference
graphs \cite{garjoh79}. There exist several low-complexity alternatives such as \emph{Maximal Scheduling} and
\emph{Greedy Maximal Scheduling}, but in general these algorithms can only guarantee to achieve a fraction of
the capacity region (see \cite{lecnisri09} and references therein).

Due to their simplicity, random access type scheduling algorithms such as Aloha and Carrier Sense Multiple
Access (CSMA) are widely used in practice. Performance analysis of random access algorithms in single-hop
wireless networks can be found in \cite{bergal92}. In \cite{boo87} the authors introduced a continuous-time
Markov chain model to analyze the performance of a (fixed-parameter) CSMA algorithm in multi-hop wireless
networks, and it was shown that the stationary distribution over the schedules has a product form. The model
was used in \cite{wankar05} to study throughput and fairness issues of the CSMA algorithm, and its
insensitivity properties were studied in~\cite{liekaileuwon07}. A discrete-time version of the algorithm was
studied in \cite{nisri09,nitansri10}.

In \cite{jiawal08} the authors proposed an adaptive CSMA algorithm where the links adaptively adjust their
parameters based on locally measured arrival and service rates. The algorithm was shown to be
throughput-optimal under a time-scale separation assumption (the CSMA Markov chain converges to its
steady-state distribution instantaneously compared to the time-scale of adaptation of the CSMA parameters)
which can be justified using a stochastic-approximation argument \cite{liuyipro08,CDC_convergence}. In
\cite{rajshashi09} the authors established throughput-optimality of their adaptive CSMA algorithm without the
time-scale separation assumption by choosing the link parameters to be slowly varying functions of the queue
lengths. The discrete-time equivalent of this analysis for the model in \cite{nisri09,nitansri10} appears in
\cite{ghasri10}.

Central to these CSMA algorithms is the so-called \emph{Glauber dynamics}, which is a Markov Chain Monte
Carlo method that can be used to sample the independent sets of a graph according to a product-form
distribution \cite{dyegre00,vig01}. Under traditional Glauber dynamics, in each time slot one link will be
selected uniformly at random, and only that link can change its state while other links will keep their
states unchanged. For the chosen link, if all of its neighboring links were in state $0$ (inactive) in the
previous slot, then the link will choose to be in state $1$ (active) with probability
$\frac{\lambda}{1+\lambda}$ and in state $0$ with probability $\frac{1}{1+\lambda}$; otherwise (i.e., if at
least one of its neighboring links was in state $1$ in the previous slot), the link will choose to be in
state $0$ definitely. In statistical physics, the parameter $\lambda$ is called the \emph{fugacity} since it
indicates how likely a selected site will change its state.\footnote{And the same fugacity $\lambda$ is
shared by all links under the traditional Glauber dynamics. In the current paper, however, we need to
consider heterogeneous fugacities. That is, different links have different fugacities.} Glauber dynamics has
many applications in statistical physics, graph coloring, approximate counting, and combinatorial
optimization (e.g., \cite{dyegre00,mar99,vig01}). In these applications, the performance of the Glauber
dynamics is often determined by how fast the Markov chain converges to the stationary distribution. The
Glauber dynamics is said to be \emph{fast mixing} if the \emph{mixing time} (will be defined formally later)
grows polynomially in the size of the graph. The algorithm in \cite{nisri09,nitansri10} is a generalization
of Glauber dynamics where multiple links are allowed to update their states in parallel. The CSMA algorithm
in \cite{boo87} can be viewed as continuous-time Glauber dynamics with fixed parameters, and the CSMA
algorithms in \cite{jiawal08,rajshashi09} can be viewed as continuous-time Glauber dynamics with adaptive
parameters. In this paper, we focus on the discrete-time CSMA algorithm suggested in
\cite{nisri09,nitansri10}. An important feature of this algorithm is that the \emph{overhead} for signaling
is constant (independent of the size of the network) even taking into account the collisions during the
signaling phase. The results in this paper can be extended to other versions of CSMA algorithms as well.

The recently proposed Glauber dynamics based CSMA algorithms have made an important progress to the design of
low-complexity distributed scheduling algorithms to achieve maximum throughput in wireless networks. On the
other hand, the delay performance of these CSMA algorithms has not been well understood. As shown in a recent
work \cite{shatsetsi09}, for general networks, it may not be possible to design low-complexity scheduling
algorithms which can achieve both low delay (i.e., grows polynomially in the number of links) and even a
diminishingly small fraction (i.e., approaches zero when the number of links increases) of the capacity
region unless $\mathbf{NP}\subseteq \mathbf{BPP}$ or $\mathbf{P}=\mathbf{NP}$. In this paper we prove a
positive result. We show that our parallel Glauber dynamics based CSMA scheduling algorithm can achieve both
low delay and a fraction (independent of the size of the network) of the capacity region  when the
interference graphs satisfy certain properties.

The main contributions of this paper include:
\begin{itemize}
\item We analyze the mixing time of Glauber dynamics with parallel updates and heterogenous fugacities. We
derive various conditions on the system parameters such as fugacities, vertex degrees and update
probabilities, under which the mixing time grows logarithmically in the number of vertices.

\item Based on the above mixing time results, we show that, for wireless networks with bounded-degree
interference graphs, the parallel Glauber dynamics based CSMA algorithm can achieve a small queue length
($O(\log n)$ where $n$ is the number of links in the network) at each link if the arrival rate lies within a
fraction (independent of $n$) of the capacity region and the fugacities of the links are appropriately chosen
and fixed. Moreover, we consider an adaptive version of the CSMA algorithm where the fugacities of the links
are adjusted based on local queue length information. We show that the total queue length in the network
grows polynomially ($O(n^3\log n)$) under the adaptive CSMA algorithm.

\item Unlike prior analysis of Glauber dynamics based CSMA algorithms which uses the  \emph{conductance}
method to obtain exponential bounds on the mixing time and the queue lengths, here we use the \emph{coupling}
method which allows us to obtain polynomial bounds on the queue lengths for a fraction of the capacity
region.

\item For a special but important network topology (Wireless LANs), we show that Glauber dynamics based CSMA
can support the full capacity region with linear mixing time.

\end{itemize}

In a related work \cite{shashi10}, the authors proposed a CSMA algorithm which can achieve order-optimal
delay performance (i.e., the per-node delay is bounded by a constant) for networks with \emph{polynomial
growth structure} (the number of $r$-hop neighbors of any node is bounded by a polynomial of $r$). The key
idea is to (periodically) partition the network into sub-networks with bounded number of nodes in each
sub-network such that the mixing time of the Markov chain of the schedule in each sub-network is bounded by a
constant. However, the partition is realized by a distributed coloring algorithm which requires the nodes to
exchange messages with their multi-hop neighbors, and this overhead can be significant and may increase with
the size of the network, especially when the discrete nature of signaling is considered. So a direct
comparison of our results and the results in \cite{shashi10} appears to be difficult.

The paper is organized as follows. In Section~\ref{sec:parallelGD} we introduce a CSMA scheduling algorithm
based on a generalization of Glauber dynamics with parallel updates and heterogenous fugacities. In
Section~\ref{sec:mixingPGD} we derive bounds on the mixing time of parallel Glauber dynamics, with the proof
presented in Section~\ref{s.proof-fast}. In Section~\ref{sec:queuelength} we analyze the delay performance of
the CSMA scheduling algorithms using the mixing time results. Section~\ref{s.complete-graph} is dedicated to
the analysis of complete interference graphs. The paper is concluded in Section~\ref{sec:conclusion}.

\section{CSMA Scheduling Based on Parallel Glauber dynamics}  \label{sec:parallelGD}

For a wireless link $i$, we use $\mathcal{N}_i$ to denote the set of conflicting links (called
\emph{conflict} set or \emph{neighbor} set) of link $i$: if any link in $\mathcal{N}_i$ is active
(transmitting), then link $i$ cannot be active.

The interference relationship among the wireless links can be represented by the so-called \emph{interference
graph} (or \emph{conflict graph}) $G=(V,E)$, where the vertices in $V$ represent \emph{wireless links} in the
network, and there is an edge between two vertices in $G$ if the corresponding wireless links interfere with
each other. (For example, Fig.\ref{fig:3-link}(a) shows a wireless network of 3 links, where link $2$
interferes with links $1$ and $3$, and links $1$ and $3$ don't interfere with each other. So, the associated
interference graph is shown in Fig.\ref{fig:3-link}(b), and the conflict set of link 2 is
$\mathcal{N}_2=\{1,3\}$, etc.) An \emph{independent set} of $G$ is a subset of the vertices in $V$ where no
two vertices are neighbors of each other. Let $\mathcal{I}$ be the set of all independent sets of $ G$.

\begin{figure}
\includegraphics[width=8cm]{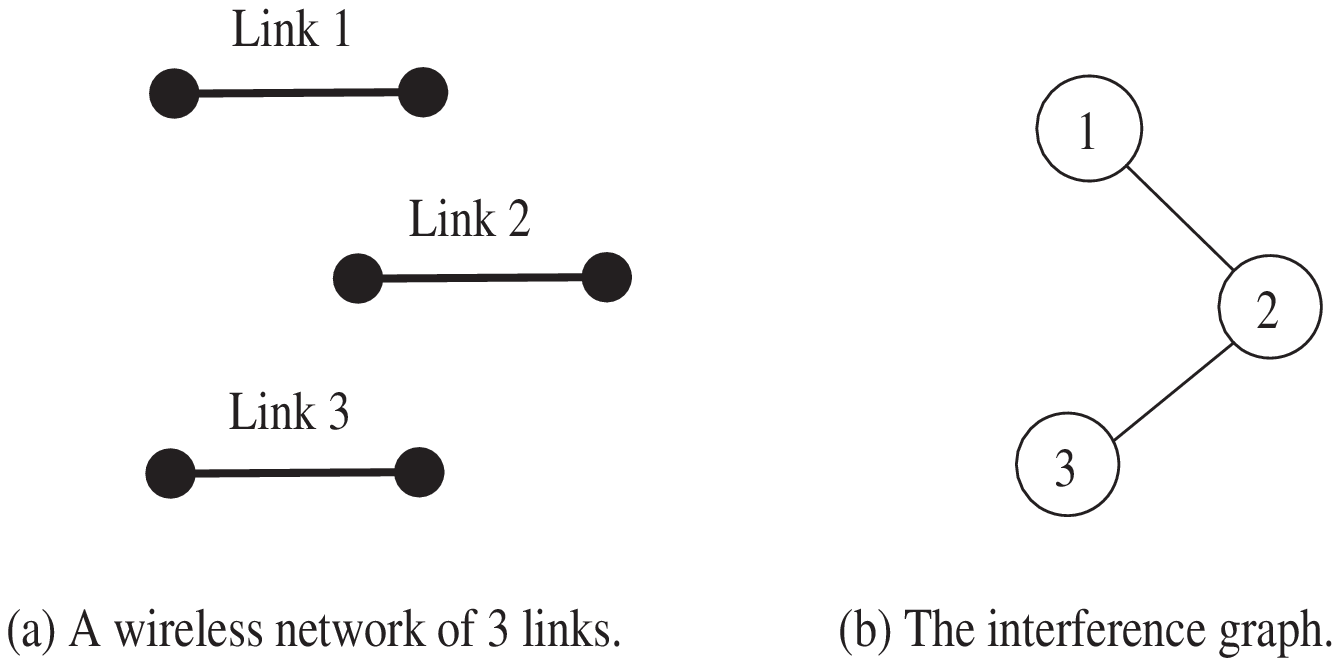}
\caption{An example network and the corresponding interference graph.} \label{fig:3-link}
\end{figure}

A \emph{feasible} \emph{schedule} of the network is a set of wireless links that can be active at the same
time according to the conflict set constraint, i.e., no two links in a feasible schedule conflict with each
other. This corresponds to an independent set of vertices in the interference graph. (In
Fig.\ref{fig:3-link}(b), for example, the sets $\{1\},\{1,3\},\{2\}$ are all feasible schedules.) We will use
\emph{links} and \emph{vertices} interchangeably throughout this paper.

Suppose $|V|=n$. We can represent a feasible schedule by a vector $\sigma$ of the form $(\sigma_i)_{i\in V}$,
with $\sigma_i\in \mathcal{X}=\{0, 1\}$ for all $i\in V$. For a link $i$ and a schedule $\sigma$, we say link
$i$ is included in the schedule (written as $i\in \sigma$) if $\sigma_i=1$. Note that $\sigma$ is a feasible
schedule if the set $\{i\in V:\sigma_i=1\}$ is an independent set of $G$, i.e., if $\sigma_i+\sigma_j \leq 1,
\mbox{ for all } (i,j)\in  E.$ Let $\Omega\subseteq \mathcal{X}^n$ be the set of all feasible schedules on
$G$.

We consider a time-slotted system. A \emph{scheduling algorithm} is a procedure to decide which schedule to
be used in every time slot for data transmission.  In \cite{nisri09} we proposed a scheduling algorithm based
on a generalization of Glauber dynamics (called \emph{parallel Glauber dynamics}) where multiple links are
allowed to update their states in a single time slot. The key idea is that in every time slot, we select an
independent set of links $\mathbf m \in \mathcal I$ to update their states according to a randomized
procedure, i.e., we select $\mathbf m \in \mathcal I$ with probability $q_{\mathbf m}$, where $\sum_{\mathbf
m \in \mathcal I} q_{\mathbf m}=1.$ We call $\mathbf m$ the \emph{decision schedule}.\footnote{A distributed
mechanism to generate the decision schedule, as suggested in \cite{nisri09}, is the following. At the
beginning of a slot, each link independently transmits a short INTENT message with probability $a\in (0,1)$.
A link is included in the decision schedule if (and only if) it sends an INTENT message while none of its
neighbors sends such a message.}

The parallel Glauber dynamics is formally described as follows.

\noindent\hrulefill
\\
\textbf{Parallel Glauber Dynamics (in Time Slot $t$)}

\noindent\hrulefill
\begin{itemize}
\item[1.] Randomly choose a decision schedule $\mathbf m(t)\in \mathcal I$ with probability $q_{\mathbf
m(t)}$.

\item[2.] For every link $i\in \mathbf m(t)$:\\
\phantom{aa} \textbf{If} $\sum_{j\in \mathcal{N}_i} \sigma_j(t-1)=0$  \\%no vertices in $\mathcal{N}_v$ were included in $\sigma(t-1)$, i.e.,
\phantom{aaaa} (a) $\sigma_i(t)=1$ with probability $p_i=\frac{\lambda_i}{1+\lambda_i}.$ \\
\phantom{aaaa} (b) $\sigma_i(t)=0$ with probability $\bar{p}_i=\frac{1}{1+\lambda_i}.$ \\
\phantom{aa} \textbf{Else}\\
\phantom{aaaa} (c) $\sigma_i(t)=0$.

\item[] For every link $j \notin \mathbf m(t):$\\
\phantom{aaaa} (d) $\sigma_j(t)=\sigma_j(t-1)$.

\end{itemize}
\noindent\hrulefill \medskip{}

Under the Parallel Glauber Dynamics based CSMA (called \textbf{PGD-CSMA} for short), $\sigma(t)$ is used as
the \emph{transmission schedule} in time slot $t$: link $i$ will transmit a data packet if $\sigma_i(t)=1$,
and will keep silent if $\sigma_i(t)=0$. Note that link $i$ knows whether $\sum_{j\in \mathcal{N}_i}
\sigma_j(t-1)=0$ by conducting \emph{carrier sensing} in time slot $t-1$: the channel (medium) will be sensed
idle if none of its neighboring links were transmitting (i.e., $\sum_{j\in \mathcal{N}_i} \sigma_j(t-1)=0$).
$p_i$ is called the \emph{link activation probability}, which is determined by the fugacity $\lambda_i$ of
link $i$.

We can show that if the transmission schedule used in the previous slot and the decision schedule used in the
current slot both are feasible, then the transmission schedule generated in the current slot is also feasible
\cite{nisri09}. Moreover, given the fugacities $\lambda_i$'s, because $\sigma(t)$ only depends on the
previous schedule $\sigma(t-1)$ and some randomly selected decision schedule $\mathbf m(t)$, $\sigma(t)$
evolves as a discrete-time Markov chain (DTMC).

\begin{theorem} \label{theorem:productform}
\emph{(\cite{nisri09}) A necessary and sufficient condition for the parallel Glauber dynamics to be
irreducible and aperiodic is $\cup_{\mathbf m\in \mathcal I: q_{\mathbf m}>0}\mathbf m = V,$ or equivalently,
the probability of selecting link $i$ in the decision schedule $q_i := \sum_{\mathbf m\ni i}q_{\mathbf m}>0$
for all $i\in V$. In this case the Markov chain is reversible and has the following product-form stationary
distribution:}
\begin{eqnarray}
\pi(\sigma) & = & \frac{\prod_{i \in \sigma}\lambda_i}{\sum_{\sigma' \in \Omega} \prod_{i \in
\sigma'}\lambda_i}. \label{eq:productformPGD}
\end{eqnarray}
\end{theorem}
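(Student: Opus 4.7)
The plan is to handle (i) irreducibility and aperiodicity, and (ii) reversibility with the claimed product-form stationary distribution, separately.

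For (i), necessity of the condition is immediate: if some link $i$ has $q_i=0$, its coordinate is frozen, so any two states differing in the $i$-th coordinate lie in distinct communicating classes. For sufficiency, I would argue reachability via the all-zero state $\mathbf 0$. From any $\sigma\in\Omega$, for each currently active link $i$ pick a decision schedule $\mathbf m\ni i$ with $q_{\mathbf m}>0$; with positive probability every link in $\mathbf m$ ends up at $0$ (links with an active neighbor are forced to $0$, and the remaining eligible ones flip to $0$ with probability $\bar p_j>0$), which strictly reduces the number of active links. Iterating reaches $\mathbf 0$ in finitely many steps with positive probability, and a symmetric argument builds up any target $\sigma$ from $\mathbf 0$. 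Aperiodicity follows because $\mathbf 0$ has a positive self-loop probability (keep all chosen links inactive).

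For (ii), the strategy is to establish detailed balance conditionally on each decision schedule $\mathbf m$ and then average against $q_{\mathbf m}$. Let $P_{\mathbf m}(\sigma,\sigma')$ denote the one-step transition probability given that the decision schedule is $\mathbf m$. A transition is possible only if $\sigma_j=\sigma'_j$ for every $j\notin\mathbf m$, so the symmetric difference of $\sigma$ and $\sigma'$ lies in $\mathbf m$. The crucial structural observation is that because $\mathbf m$ is an independent set, the neighborhood $\mathcal N_i$ of any $i\in\mathbf m$ is disjoint from $\mathbf m$, so the states of these neighbors are unchanged by the update. Consequently the set of eligible links
\begin{equation}
A(\sigma,\mathbf m)=\{i\in\mathbf m:\sum_{j\in\mathcal N_i}\sigma_j=0\}
\end{equation}
depends only on $(\sigma_j)_{j\notin\mathbf m}$ and therefore coincides with $A(\sigma',\mathbf m)$; moreover feasibility of $\sigma,\sigma'\in\Omega$ forces $\sigma_i=\sigma'_i=0$ for all $i\in\mathbf m\setminus A$.

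With this in hand, the update factorizes into independent coin flips on $A$: each $i\in A$ takes state $1$ with probability $p_i=\lambda_i/(1+\lambda_i)$ and state $0$ with probability $\bar p_i=1/(1+\lambda_i)$, so
\begin{equation}
\frac{P_{\mathbf m}(\sigma,\sigma')}{P_{\mathbf m}(\sigma',\sigma)}=\prod_{i\in A}\lambda_i^{\sigma'_i-\sigma_i}=\frac{\prod_{i\in\sigma'}\lambda_i}{\prod_{i\in\sigma}\lambda_i}=\frac{\pi(\sigma')}{\pi(\sigma)},
\end{equation}
i.e., detailed balance holds for $P_{\mathbf m}$. Averaging against $q_{\mathbf m}$ preserves detailed balance, so $\pi$ is reversible and stationary for the full chain. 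The main obstacle is the parallelism: one must rule out any interference among simultaneous updates, and the independent-set property of $\mathbf m$ is exactly what guarantees that $A(\sigma,\mathbf m)$ is invariant along the step, reducing the conditional dynamics to a product of decoupled single-site flips that trivially satisfy detailed balance with $\pi$.
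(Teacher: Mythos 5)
Your proof is correct; the paper states this theorem without its own proof, citing \cite{nisri09}, and your argument --- establishing detailed balance conditionally on each decision schedule $\mathbf m$, using the independent-set property of $\mathbf m$ to show the eligible set $A(\sigma,\mathbf m)$ is invariant so the update decouples into single-site flips, together with the reachability-through-$\mathbf 0$ argument for irreducibility --- is exactly the standard proof from that reference. No gaps.
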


%\begin{remark}
%\emph{The single-site Glauber dynamics can be viewed as a special case of the parallel Glauber dynamics in
%which $q_{\mathbf m}>0$ if and only if the decision schedule $\mathbf m$ always consists of only one vertex.}
%\end{remark}

Based on the product-form distribution, one can establish throughput-optimality of PGD-CSMA by either
choosing the link activation probabilities (fugacities) as appropriate increasing functions of the
(time-varying) queue lengths, or adjusting the fugacities based on the measured arrival and service rates, as
in \cite{jiawal08,nisri09,rajshashi09}. The focus of this paper is to analyze the mixing time of parallel
Glauber dynamics and the delay performance of PGD-CSMA. We will show that the parallel Glauber dynamics is
fast mixing when the fugacities satisfy certain conditions, and this implies that PGD-CSMA induces small
queue lengths when the arrival rates lie in a fraction of the capacity region.

\section{Mixing Time of Parallel Glauber Dynamics}  \label{sec:mixingPGD}

\subsection{Definitions} \label{sec:prelim}

Consider a finite-state, irreducible, aperiodic Markov chain $(P,\Omega, \pi)$ where $P$ denotes the
transition matrix, $\Omega$ denotes the state space, and $\pi$ denotes the unique stationary distribution.
First we describe a notion of distance between distributions.
\begin{definition}
\emph{The \emph{variation distance} between two distributions $\mu, \mu'$ on $\Omega$ is defined as}
\begin{eqnarray}
||\mu-\mu'||_{var} = \frac{1}{2}\sum_{x\in \Omega}|\mu(x)-\mu'(x)|.
\end{eqnarray}
\end{definition}
Note that $0\leq ||\mu-\mu'||_{var} \leq 1$ and $||\mu-\mu'||_{var}=0$ if and only if $\mu=\mu'$.

\begin{definition}
\emph{The \emph{mixing time} $T_{mix}$ of the Markov chain is defined as the time required for the Markov
chain to get close to the stationary distribution. More precisely,}
\begin{eqnarray}
T_{mix} = \max_{x\in \Omega} \inf\Big\{t: ||\mu_{x,t}-\pi||_{var} \leq \frac{1}{e}\Big\}
\end{eqnarray}
\emph{where $\mu_{x,t}$ is the distribution of the Markov chain at time $t$ if the Markov chain starts with
state $x$.}
\end{definition}

\subsection{Conditions for fast mixing of Parallel Glauber Dynamics}

We now state the main theorem of this section and a corollary on the mixing time of parallel Glauber dynamics
with fixed fugacities $\boldsymbol \lambda$. The proof of the main theorem will be presented in the next
section.

We associate a weight $f(v)>0$ to each vertex (or link) $v \in V$, and call $f(\cdot)$ the \emph{weight
function}.

\begin{theorem} \label{thm:main}
\emph{For any positive weight function $f(v)$ of $v\in V$, let $m=\min_{v\in V}f(v)$, $M=\max_{v\in V}f(v)$,
and $\xi=\frac{M}{m}$. If}
\begin{eqnarray} \label{eq:theta}
\theta \triangleq \min_{v\in V}\left\{q_v f(v)-\sum_{w\in\mathcal N_v}q_w\frac{\lambda_w
}{1+\lambda_w}f(w)\right\} > 0,
\end{eqnarray}
\emph{where $q_v$, as defined before, is the probability that vertex $v$ is included in the decision
schedule, then under the parallel Glauber dynamics,}
\begin{equation} \label{eq:dvboundPGD}
||\mu_{x,t}-\pi||_{var}  \leq  \min\Big\{1, (1-\frac{\theta}{M})^{t} n\xi\Big\},\forall x \in \Omega.
\end{equation}
\emph{Therefore the parallel Glauber dynamics is fast mixing and its mixing time is bounded by:}
\begin{eqnarray} \label{eq:mixingPGB}
T_{\textit{mix}} & \leq &  \bar{T}_{mix}  = \Big \lceil \frac{M}{\theta}\log(n\xi e) \Big\rceil.
\end{eqnarray}
\end{theorem}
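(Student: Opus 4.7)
The plan is to prove Theorem~\ref{thm:main} by a path-coupling argument applied to the weighted Hamming distance $d(\sigma,\sigma') = \sum_{v\in V} f(v)\,\mathbf{1}\{\sigma_v \neq \sigma'_v\}$ on the state space $\Omega$. The condition \eqref{eq:theta} is tailored to give one-step contraction of this distance when two chains are coupled, and then the bounds \eqref{eq:dvboundPGD} and \eqref{eq:mixingPGB} follow from the standard coupling lemma plus the usual path-coupling extension.

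First I would set up the coupling. Run two copies $\sigma(t), \sigma'(t)$ of the parallel Glauber dynamics that share the same decision schedule $\mathbf m(t)$ drawn once according to $\{q_{\mathbf m}\}$. For each $i \in \mathbf m(t)$ that can be activated in both chains, use the same uniform $U_i$ so that their Bernoulli$(p_i)$ updates are maximally coupled; for $i$ that must be set to $0$ in only one of the chains, the coupling is forced. Then I would specialize to the path-coupling setup: assume $\sigma$ and $\sigma'$ differ at exactly one vertex $v$, so $d(\sigma,\sigma') = f(v)$. Because $\mathcal N_v$ is an independent set's complement in the incidence sense, and $\mathbf m(t) \in \mathcal I$, at most one of $\{v\} \cup \mathcal N_v$ lies in $\mathbf m(t)$.

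The key step is the one-step computation of $E[d(\sigma(t{+}1),\sigma'(t{+}1))]$. Three cases occur: (i) $v \in \mathbf m(t)$, which has probability $q_v$; here the two chains see identical neighborhoods of $v$, so the coupled update makes $\sigma_v(t{+}1) = \sigma'_v(t{+}1)$ and the discrepancy at $v$ is removed, contributing $0$; (ii) some $w \in \mathcal N_v \cap \mathbf m(t)$, which happens with probability $q_w$ for each such $w$; here the discrepancy at $v$ persists (since $v \notin \mathbf m(t)$), and a new discrepancy at $w$ is created only in the event that all of $w$'s neighbors other than $v$ are $0$ and $w$'s Bernoulli$(p_w)$ coin comes up $1$ in the chain where $\sigma_v = 0$ while in the other chain $w$ is forced to $0$; the probability of this under the shared-uniform coupling is at most $p_w = \lambda_w/(1+\lambda_w)$; (iii) neither $v$ nor any $w\in\mathcal N_v$ lies in $\mathbf m(t)$, contributing only $f(v)$. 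Summing,
\begin{equation*}
E\bigl[d(\sigma(t{+}1),\sigma'(t{+}1))\bigr] \;\leq\; (1-q_v)f(v) + \sum_{w\in\mathcal N_v} q_w \frac{\lambda_w}{1+\lambda_w} f(w) \;\leq\; f(v) - \theta.
\end{equation*}
Since $f(v) \leq M$, this gives the multiplicative contraction $E[d(\sigma(t{+}1),\sigma'(t{+}1))] \leq (1 - \theta/M)\,d(\sigma,\sigma')$.

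The rest is routine. By the path-coupling theorem (Bubley--Dyer), the same contraction holds for arbitrary $\sigma(0),\sigma'(0)$, so $E[d(\sigma(t),\sigma'(t))] \leq (1-\theta/M)^t\, d(\sigma(0),\sigma'(0)) \leq (1-\theta/M)^t\, nM$. Taking $\sigma'(0) \sim \pi$ and using Markov's inequality together with $d \geq m\,\mathbf{1}\{\sigma\neq\sigma'\}$, the coupling inequality yields $\|\mu_{x,t}-\pi\|_{\mathrm{var}} \leq \Pr(\sigma(t)\neq\sigma'(t)) \leq (1-\theta/M)^t n\xi$, clearly also bounded by $1$, which is \eqref{eq:dvboundPGD}. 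Finally, setting the right-hand side equal to $1/e$ and using $\log(1/(1-x)) \geq x$ gives the mixing-time bound \eqref{eq:mixingPGB}. The main obstacle is the one-step bound in case (ii): getting the correct coefficient $\lambda_w/(1+\lambda_w)$ requires using the independent-set structure of $\mathbf m(t)$ (so $v$ is $w$'s only possibly-differing neighbor) together with the optimal Bernoulli coupling on the single link $w$; everything else is bookkeeping.
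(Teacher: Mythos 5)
Your proposal is correct and follows essentially the same route as the paper: the same coupling (shared decision schedule and shared coin tosses), the same weighted Hamming metric, the identical one-step bound $E[\Delta\Phi]\le -q_v f(v)+\sum_{w\in\mathcal N_v}q_w\frac{\lambda_w}{1+\lambda_w}f(w)$ for adjacent pairs, and the Bubley--Dyer path-coupling theorem with $D=n\xi$. The only blemish is the incidental claim that at most one vertex of $\{v\}\cup\mathcal N_v$ lies in $\mathbf m(t)$ (two non-adjacent neighbors of $v$ can both be selected), but this is harmless because your final inequality follows from linearity of expectation over the selected vertices, exactly as in the paper's decomposition $E[\Delta^{\mathbf m}\Phi]=\sum_{y\in\mathbf m}E[\Delta^{y}\Phi]$.
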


One can specify the weight function $f(\cdot)$ to obtain various conditions on the fugacities for fast
mixing. We will show one of them below which will be used later to analyze the delay performance of PGD-CSMA.
Other conditions are given in Appendix~\ref{conditions}.

\begin{corollary} \label{coro:c3}
Let $m=\min_{v\in V}\frac{d_v}{q_v}$, $M=\max_{v\in V}\frac{d_v}{q_v}$ and $\xi=\frac{M}{m}$, where $d_v$ is
the degree of $v$ in the interference graph. If $\lambda_v<\frac{1}{d_v-1}$ for all $v\in V$, then
\begin{eqnarray}
T_{\textit{mix}} & \leq & \bar{T}_{\textit{mix}} = \Big\lceil \frac{M}{\theta}\log( n\xi e) \Big\rceil,
\end{eqnarray}
where
\begin{eqnarray}
\theta=\min_{v\in V}\left\{d_v-\sum_{w\in\mathcal N_v}\frac{\lambda_w}{1+\lambda_w}d_w\right\}.
\end{eqnarray}
\end{corollary}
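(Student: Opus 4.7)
The plan is to derive Corollary 1 as a direct specialization of Theorem 1 by choosing the weight function $f(v) = d_v/q_v$, where $d_v$ is the degree of $v$ in the interference graph and $q_v$ is the probability that $v$ is included in the decision schedule. First I would verify that with this choice, $m = \min_v d_v/q_v$ and $M = \max_v d_v/q_v$ recover the quantities stated in the corollary, so the ratio $\xi = M/m$ matches.

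Next I would substitute $f(\cdot)$ into the definition (\ref{eq:theta}) of $\theta$. The factor $q_v f(v)$ telescopes to $d_v$, and for each $w \in \mathcal{N}_v$ the factor $q_w f(w)$ telescopes to $d_w$, yielding
\begin{equation*}
\theta = \min_{v \in V}\Bigl\{d_v - \sum_{w \in \mathcal{N}_v}\frac{\lambda_w}{1+\lambda_w}\,d_w\Bigr\},
\end{equation*}
which matches the expression in the corollary. The remaining task is to check that the hypothesis $\lambda_v < 1/(d_v-1)$ for all $v$ implies $\theta > 0$, so that Theorem 1 is applicable. The condition $\lambda_w < 1/(d_w-1)$ is equivalent to $\lambda_w(d_w-1) < 1$, i.e., $\lambda_w d_w < 1 + \lambda_w$, and hence $\frac{\lambda_w}{1+\lambda_w}d_w < 1$ for every $w$. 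Since $|\mathcal{N}_v| = d_v$, summing over $w \in \mathcal{N}_v$ gives $\sum_{w \in \mathcal{N}_v}\frac{\lambda_w}{1+\lambda_w}d_w < d_v$, so the bracket inside the minimum is strictly positive for every $v$, and therefore $\theta > 0$.

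With $\theta > 0$ established, Theorem 1 applies directly and produces the bound $T_{\textit{mix}} \le \lceil (M/\theta)\log(n\xi e)\rceil$ with the $M$, $\xi$, and $\theta$ just computed, which is exactly the claim of the corollary. There is no real obstacle here beyond careful bookkeeping: the only nontrivial point is the observation that the per-neighbor bound $\frac{\lambda_w d_w}{1+\lambda_w} < 1$ combined with $|\mathcal{N}_v| = d_v$ gives precisely the slack needed to certify the hypothesis of Theorem 1, which is what motivates the particular weight choice $f(v) = d_v/q_v$ in the first place.
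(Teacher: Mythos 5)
Your proposal is correct and follows exactly the paper's own route: choose $f(v)=d_v/q_v$ in Theorem~\ref{thm:main}, observe that $\theta$ reduces to the stated expression, and note that $\lambda_w<\frac{1}{d_w-1}$ is equivalent to $\frac{\lambda_w}{1+\lambda_w}d_w<1$, which (since $|\mathcal N_v|=d_v$) forces $\theta>0$. Your write-up merely makes explicit the summation step that the paper leaves implicit.
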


\begin{proof}
Choose $f(v)=\frac{d_v}{q_v}$. By Theorem~\ref{thm:main}, the parallel Glauber dynamics is fast mixing if
$$\theta=\min_{v\in V}\left\{d_v-\sum_{w\in\mathcal N_v}\frac{\lambda_w}{1+\lambda_w}d_w\right\}>0.$$

To achieve that, we need  $\forall v\in V$, $d_v-\sum_{w\in\mathcal N_v}\frac{\lambda_w}{1+\lambda_w}d_w>0.$
It is sufficient to have $\frac{\lambda_w}{1+\lambda_w}d_w<1$, $\forall w\in V$, which is equivalent to
$\lambda_w<\frac{1}{d_w-1}$.
\end{proof}

\begin{remark} \label{remark:boundeddegree}
\emph{Let $\Delta$ be the maximum vertex degree of $G$. If each link (vertex) sends the INTENT message
independently with probability 1/2, then $(1/2)^{\Delta+1} \leq q_v \leq 1$, so $m\geq 1$ and $M\leq \Delta
\cdot 2^{\Delta+1}$. Then we have
$$\bar{T}_{\textit{mix}}  \leq  \Big\lceil \frac{\Delta \cdot 2^{\Delta+1}}{\theta}\log (\Delta \cdot 2^{\Delta+1}en ) \Big\rceil,$$
i.e., the mixing time grows as $O(\log n)$ for bounded-degree graphs when $\lambda_v < \frac{1}{\Delta-1}
\leq \frac{1}{d_v-1}, \forall v$. On the other hand, it was shown in \cite{haysin05} that traditional
single-site Glauber dynamics has a mixing time at least $\Omega(n \log n)$ for bounded-degree graphs.
Therefore, parallel Glauber dynamics reduces the mixing time by an order of magnitude.}
\end{remark}

\begin{remark}
\emph{For interference graphs with special structure, more relaxed conditions can be obtained to ensure fast
mixing. This will be discussed in Section \ref{s.complete-graph}.}
\end{remark}

\section{\label{s.proof-fast}Proof of Fast Mixing}

This section presents the proof of Theorem \ref{thm:main}. In particular, we use the coupling method to
establish the logarithmic mixing time. Readers who are only interested in the main results could skip the
section without loss of continuity.

\subsection{Preliminaries}

A useful technique to bound the mixing time of a Markov chain is via \emph{coupling}.
\begin{definition}
\emph{A \emph{coupling} of the Markov chain is a stochastic process $(X(t), Y(t))$ on $\Omega \times \Omega$
such that $\{X(t)\}$ and $\{Y(t)\}$ marginally are copies the original Markov chain, and if $X(t)=Y(t)$, then
$X(t+1)=Y(t+1)$.}
\end{definition}

Let $\Phi$ be a \emph{distance function} (\emph{metric}) defined on $\Omega \times \Omega$, which satisfies
that for any $x,y,z \in \Omega$: (1) $\Phi(x,y)\geq 0$, with equality if and only if $x=y$; (2) $\Phi(x,y) =
\Phi(y,x)$; (3) $\Phi(x,z) \leq \Phi(x,y)+\Phi(y,z).$ Let
$$D_{min} = \min_{x\neq y}\Phi(x,y), D_{max} =
\max_{x,y}\Phi(x,y), D=\frac{D_{max}}{D_{min}}.$$

The following theorem (e.g., \cite{dyegre00}) can be used to bound the mixing time of the Markov chain.

\begin{theorem} \label{theorem:coupling}
\emph{Suppose $(X(t),Y(t))$ is coupling of the Markov chain where $X(t)$ has distribution $\mu_t$ and $Y(t)$
has distribution $\mu'_t=\pi$. If there exists some constant $\beta<1$ such that, for all $x,y\in \Omega$,}
\begin{equation} \label{eq:betacoupling}
E[\Phi(X(t+1),Y(t+1))|X(t)=x,Y(t)=y]  \leq \beta \Phi(x,y).
\end{equation}
\emph{Then}
\begin{eqnarray} \label{eq:dvbound}
||\mu_t-\pi||_{var} & \leq & \min\{1, \beta^{t}D\}
\end{eqnarray}
\emph{and the mixing time of the Markov chain is bounded by:}
\begin{eqnarray} \label{eq:mixingbound}
T_{mix} \leq \Big\lceil \frac{\log(De)}{\log\beta^{-1}} \Big \rceil \leq \Big\lceil \frac{\log (De)}{1-\beta}
\Big\rceil \doteq \bar{T}_{mix}.
\end{eqnarray}
\end{theorem}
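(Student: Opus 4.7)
The plan is to reduce the variation-distance bound to a statement about the coupled chains coalescing, using the standard coupling inequality, and then control the coalescence time via the contraction hypothesis~\eqref{eq:betacoupling} together with Markov's inequality.

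First, I will iterate the one-step contraction. Conditioning on $(X(t), Y(t))$ and applying the tower property, \eqref{eq:betacoupling} gives $E[\Phi(X(t+1), Y(t+1))] \leq \beta\, E[\Phi(X(t), Y(t))]$, hence by induction
\begin{equation*}
E[\Phi(X(t), Y(t))] \leq \beta^t\, E[\Phi(X(0), Y(0))] \leq \beta^t D_{\max},
\end{equation*}
no matter how the initial pair is coupled. Since $Y(t)$ is stationary, this bounds the expected coupling distance at every time $t$.

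Next, I will invoke the coupling inequality $\|\mu_t - \pi\|_{var} \leq \Pr(X(t) \neq Y(t))$, which follows from the fact that on the event $\{X(t)=Y(t)\}$ the two marginals contribute equally to the total variation. Since $\Phi(x,y) \geq D_{\min}$ whenever $x \neq y$, Markov's inequality yields
\begin{equation*}
\Pr(X(t) \neq Y(t)) = \Pr\bigl(\Phi(X(t), Y(t)) \geq D_{\min}\bigr) \leq \frac{E[\Phi(X(t), Y(t))]}{D_{\min}} \leq \beta^t D.
\end{equation*}
Combined with the trivial bound $\|\mu_t - \pi\|_{var} \leq 1$, this gives \eqref{eq:dvbound}.

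Finally, for the mixing-time bound \eqref{eq:mixingbound}, I solve $\beta^t D \leq 1/e$, which is equivalent to $t \geq \log(De)/\log \beta^{-1}$, giving the first inequality. The second inequality then follows from the elementary calculus fact $\log \beta^{-1} = -\log \beta \geq 1 - \beta$ for $\beta \in (0,1)$, which is the only ``analytic'' ingredient needed. There is no substantial obstacle here: the argument is entirely standard once one has \eqref{eq:betacoupling}. The only place to be slightly careful is noting that the coupling inequality applies uniformly in the initial state, so the maximum over $x \in \Omega$ in the definition of $T_{mix}$ is absorbed into the bound $\beta^t D$, which depends only on $D_{\max}/D_{\min}$ and not on the starting configuration.
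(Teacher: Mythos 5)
Your proof is correct and is the standard coupling argument: iterate the contraction to get $E[\Phi(X(t),Y(t))]\leq \beta^t D_{max}$, apply the coupling inequality together with Markov's inequality over the threshold $D_{min}$, and finish with $\log\beta^{-1}\geq 1-\beta$. The paper does not actually prove this theorem --- it cites it as a known result from the coupling literature --- and your argument matches the standard proof it is implicitly relying on, including the observation that the bound is uniform over the initial state.
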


In general, determining $\beta$ in the contraction condition (\ref{eq:betacoupling}) is hard since we need to
check the condition for all pairs of configurations $(x,y)$. In \cite{bubdye97} the so-called \emph{path
coupling} method was introduced by Bubley and Dyer to simplify the calculation. Using path coupling, we only
need to check the contraction condition for certain pairs of configurations. The path coupling method is
described in the following theorem.

\begin{theorem}  \label{theorem:pathcoupling}
\emph{Let $S \subseteq \Omega \times \Omega$ and suppose that for all $x, y \in \Omega \times \Omega$, there
exists a path $x=z_0, z_1, \ldots, z_r=y$ between $x$ and $y$ such that $(z_l,z_{l+1})\in S$ for $0\leq l <r$
and $\Phi(x,y) = \sum_{l=0}^{r-1}\Phi(z_l, z_{l+1}).$ Suppose $(X(t),Y(t))$ is a coupling of the Markov chain
as in Theorem~\ref{theorem:coupling}. If there exists $\beta<1$ such that for any $(x,y) \in S$, the
contraction condition (\ref{eq:betacoupling}) holds, then we have (\ref{eq:dvbound}) and the mixing time of
the Markov chain is bounded as in (\ref{eq:mixingbound}).}
%\begin{eqnarray} \label{eq:pathmixingbound}
%T_{mix} \leq  \bar{T}_{mix} = \Big\lceil \frac{\log (De)}{1-\beta} \Big\rceil.
%\end{eqnarray}
\end{theorem}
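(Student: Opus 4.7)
The plan is to deduce the contraction condition (\ref{eq:betacoupling}) for every $(x,y) \in \Omega \times \Omega$ from its assumed validity on the restricted set $S$, and then invoke Theorem~\ref{theorem:coupling} directly. The one-step coupling for an arbitrary pair $(x,y)$ will be assembled by gluing together the pairwise $S$-couplings furnished by the hypothesis along a geodesic path linking $x$ to $y$.

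Fix $x,y \in \Omega$ and a path $x = z_0, z_1, \ldots, z_r = y$ with $(z_l, z_{l+1}) \in S$ for each $l$ and $\Phi(x,y) = \sum_{l=0}^{r-1}\Phi(z_l, z_{l+1})$, whose existence is granted by hypothesis. For each adjacent pair $(z_l, z_{l+1}) \in S$ the hypothesis supplies a coupling of one step of the Markov chain starting respectively from $z_l$ and $z_{l+1}$ whose post-step expected distance is bounded by $\beta \Phi(z_l, z_{l+1})$.

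Next I would glue these pairwise couplings into a joint distribution on $(W_0, W_1, \ldots, W_r)$ by sequential sampling: draw $W_0$ from the one-step transition law at $z_0$; then for $l = 1, \ldots, r$, draw $W_l$ from the conditional law of the second coordinate of the $(z_{l-1}, z_l)$-coupling given that its first coordinate equals $W_{l-1}$. An easy induction shows that each $W_l$ has the correct one-step marginal law from $z_l$, so every consecutive pair $(W_{l-1}, W_l)$ is distributed exactly according to the prescribed pairwise $S$-coupling. Setting $X(t+1) = W_0$ and $Y(t+1) = W_r$ thus yields a legitimate one-step coupling from $(x,y)$; by linearity of expectation together with the triangle inequality for $\Phi$,
\[
E[\Phi(X(t+1),Y(t+1)) \mid X(t)=x, Y(t)=y] \leq \sum_{l=0}^{r-1} E[\Phi(W_l, W_{l+1})] \leq \beta \sum_{l=0}^{r-1} \Phi(z_l, z_{l+1}) = \beta \Phi(x,y),
\]
which is precisely (\ref{eq:betacoupling}) for the arbitrary pair $(x,y)$. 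Applying Theorem~\ref{theorem:coupling} to this global coupling then delivers both (\ref{eq:dvbound}) and (\ref{eq:mixingbound}).

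The main subtlety lies in the consistency of the gluing step: one must verify that the sequentially constructed process $(W_0, \ldots, W_r)$ really does make each consecutive pair $(W_{l-1}, W_l)$ jointly distributed as the assumed pairwise $S$-coupling, since only then does the pairwise contraction bound apply to $E[\Phi(W_{l-1}, W_l)]$. This is a straightforward marginal-preservation check: each pairwise coupling has the prescribed chain-transition marginals by definition, and conditional resampling of the second coordinate given a value for the first preserves those marginals, propagating the correct one-step law to $W_l$ at every stage of the recursion. Once this consistency is granted, the only analytical content is the triangle inequality for $\Phi$, after which Theorem~\ref{theorem:coupling} finishes the argument.
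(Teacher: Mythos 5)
Your proposal is correct: the sequential gluing of the pairwise $S$-couplings along a $\Phi$-additive path, followed by the triangle inequality and an appeal to Theorem~\ref{theorem:coupling}, is exactly the standard Bubley--Dyer argument, and the marginal-preservation check you flag is indeed the only point requiring care (and is routine on a finite state space). The paper itself states this result without proof, citing \cite{bubdye97}, so there is no in-paper argument to compare against; your write-up supplies precisely the proof found in that reference.
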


Note that the key simplification in the \emph{path coupling theorem} (Theorem~\ref{theorem:pathcoupling}),
compared to the \emph{coupling theorem} (Theorem~\ref{theorem:coupling}), is that the contraction condition
(\ref{eq:betacoupling}) needs to hold only for $(x,y) \in S$, instead of $(x,y) \in \Omega\times\Omega$.

\subsection{Proof of Fast Mixing} \label{subsec:fastmixing}

Now we analyze the mixing time of Glauber dynamics with parallel updates and heterogenous fugacities using
the path coupling technique. We use the following \emph{distance function} between feasible schedules. For
any $\sigma,\eta\in\Omega$, let
\begin{eqnarray} \label{eq:distfun}
\Phi(\sigma,\eta) = \sum_{v}|\sigma_v-\eta_v|f(v) = \sum_{v\in\sigma\bigtriangleup\eta}f(v),
\end{eqnarray}
where $f(v)>0$ is the \emph{weight function} of $v\in  V$ and $\sigma\bigtriangleup\eta
=(\sigma\setminus\eta)\cup(\eta\setminus\sigma)$ is the \emph{symmetric difference} between $\sigma$ and
$\eta$. Note that the distance function is a weighted Hamming distance function and satisfies all the
properties of a metric.

Consider the following \emph{coupling} $(\sigma(t),\eta(t))$: in every time slot both chains select the same
decision schedule and use the same coin toss for every vertex in the decision schedule if that vertex can be
added to both schedules.

Let $E[\Delta\Phi(\sigma,\eta)]$ be the (conditional) expected change of the the distance between the states
of the two Markov chains $\{\sigma(t)\}$ and $\{\eta(t)\}$ after one slot:
\begin{eqnarray*}
E[\Delta\Phi(\sigma,\eta)] & = & E[\Phi(\sigma(t+1), \eta(t+1)|\sigma(t)=\sigma, \eta(t)=\eta] \\
& & - \Phi(\sigma, \eta).
\end{eqnarray*}

For any $\mathbf m\in \mathcal{I}$, let
$$E[\Delta^{\mathbf m}\Phi(\sigma,\eta)]=E[\Delta\Phi(\sigma,\eta)|\mathbf m \mbox{ is the decision schedule}].$$

We say that $\sigma,\eta\in\Omega$ are \emph{adjacent} and we write $\sigma\sim\eta$ if there exists $v\in V$
such that $\sigma$ and $\eta$ differ only at $v$. Let
$$S=\{(\sigma, \eta):\sigma, \eta \in \Omega \mbox{ and
} \sigma\sim\eta\}$$
be the set of all pairs of adjacent schedules. Note that under the distance function
defined in (\ref{eq:distfun}), for all $\sigma, \eta \in \Omega$, we can find a path $\sigma=\tau_0, \tau_1,
\ldots, \tau_{|\sigma \triangle \eta|}=\eta$ between $\sigma$ and $\eta$ such that $(\tau_l,\tau_{l+1})\in S$
for $0\leq l <|\sigma \triangle \eta|$ and
$$\Phi(\sigma, \eta) = \sum_{l=0}^{|\sigma \triangle
\eta|-1}\Phi(\tau_l, \tau_{l+1}).$$

\begin{lemma}\label{lemma:dist_multiset}
\emph{Consider a pair of adjacent schedules $\sigma$ and $\eta$ that differ only at $v$ (so $\Phi(\sigma,
\eta)=f(v)$),}
\begin{eqnarray} \label{eq:dist_dif}
E[\Delta\Phi(\sigma,\eta)] \leq -q_v f(v)+\sum_{w\in\mathcal N_v} q_w\frac{ \lambda_w}{1+\lambda_w}f(w).
\end{eqnarray}
\end{lemma}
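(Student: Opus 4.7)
The plan is to analyze the coupling slot-by-slot, conditioning on the random decision schedule $\mathbf m$. Without loss of generality assume $\sigma_v=1$ and $\eta_v=0$. Since $\sigma$ is feasible and $v\in\sigma$, every $w\in\mathcal N_v$ has $\sigma_w=0$, and because $\sigma$ and $\eta$ agree outside $\{v\}$, also $\eta_w=0$. Thus the two chains differ only at $v$, and the neighborhoods of $v$ in both chains consist entirely of $0$'s. I will decompose the expected distance change as $E[\Delta\Phi(\sigma,\eta)]=\sum_{\mathbf m\in\mathcal I}q_{\mathbf m}\,E[\Delta^{\mathbf m}\Phi(\sigma,\eta)]$ and handle two cases according to whether $v\in\mathbf m$.

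\textbf{Case $v\in\mathbf m$.} Since $\mathbf m$ is independent, no $w\in\mathcal N_v$ lies in $\mathbf m$, so no neighbor of $v$ will be updated. At $v$, both chains see an all-zero neighborhood, so the coupled coin flip sends $\sigma_v(t+1)=\eta_v(t+1)$ deterministically, eliminating the difference and reducing $\Phi$ by $f(v)$. For any other $u\in\mathbf m$, $u$ is not adjacent to $v$, so $u$ has an identical neighborhood state in the two chains and is updated to the same value by the coupled coin. Hence $E[\Delta^{\mathbf m}\Phi(\sigma,\eta)]=-f(v)$.

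\textbf{Case $v\notin\mathbf m$.} The difference at $v$ persists. New disagreements can appear only at vertices whose neighborhood sum differs between $\sigma$ and $\eta$, which forces them to lie in $\mathcal N_v$ (the only vertex whose value differs is $v$). For $w\in\mathbf m\cap\mathcal N_v$: in $\sigma$ the neighbor $v$ equals $1$, so $w$ is forced to $0$; in $\eta$, if any other neighbor of $w$ equals $1$ (identically in both chains), $w$ is again forced to $0$ and no disagreement appears; otherwise $w$ becomes $1$ with probability $p_w=\lambda_w/(1+\lambda_w)$ under the coupled coin, contributing $f(w)$ to the distance with that probability. Therefore $E[\Delta^{\mathbf m}\Phi(\sigma,\eta)]\leq\sum_{w\in\mathbf m\cap\mathcal N_v}p_w f(w)$.

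Combining the two cases and exchanging the order of summation via $\sum_{\mathbf m\ni w}q_{\mathbf m}=q_w$ yields
\[
E[\Delta\Phi(\sigma,\eta)]\leq -q_v f(v)+\sum_{w\in\mathcal N_v}q_w\frac{\lambda_w}{1+\lambda_w}f(w),
\]
as required. The main subtlety to watch is Case $v\in\mathbf m$: one must invoke the independent-set property of $\mathbf m$ to rule out simultaneous updates at any $w\in\mathcal N_v$ that could otherwise create new disagreements even while $v$'s disagreement is erased. The rest is bookkeeping: verifying that vertices outside $\mathcal N_v\cup\{v\}$ cannot produce a disagreement (since their neighborhood states are identical in $\sigma$ and $\eta$), and that the coupling's shared coin flips make the $\eta$-side probability of $w$ turning on exactly $p_w$ in the worst sub-case.
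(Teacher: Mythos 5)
Your proof is correct and follows essentially the same argument as the paper's: decompose the expected distance change over the vertices updated by the decision schedule, observe that an update at $v$ erases the disagreement (contributing $-f(v)$ with probability $q_v$) while an update at $w\in\mathcal N_v$ creates a new disagreement with probability at most $\lambda_w/(1+\lambda_w)$ (contributing at most $q_w\frac{\lambda_w}{1+\lambda_w}f(w)$), and that no other vertex matters. Your explicit case split on $v\in\mathbf m$ versus $v\notin\mathbf m$, and the remark that the independent-set property of $\mathbf m$ prevents simultaneous updates at $v$ and its neighbors, is just a slightly more careful bookkeeping of the paper's per-vertex decomposition $E[\Delta\Phi]=\sum_{y\in V}q_y E[\Delta^y\Phi]$.
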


\begin{proof}
Without loss of generality, suppose $\sigma_v=0$ and $\eta_v=1.$ Note that $\eta_v=1$ implies that $\eta_w=0$
for all $w\in \mathcal{N}_v$. Since, $\sigma$ and $\eta$  differ only at $v$, this also means that
$\sigma_w=0$ for all $w\in \mathcal{N}_v$. We have
\begin{eqnarray*}
& & E[\Delta\Phi(\sigma,\eta)] \\
& = &  E_{\mathbf m}\Big[E[\Delta^{\mathbf m}\Phi(\sigma,\eta)] \Big] = \sum_{\mathbf m \in
\mathcal{I}}q_{\mathbf m}E[\Delta^{\mathbf
m}\Phi(\sigma,\eta)]\\
& = & \sum_{\mathbf m \in \mathcal{I}}q_{\mathbf m}\sum_{y\in\mathbf m}E[\Delta^y\Phi(\sigma,\eta)] =
\sum_{y\in V}q_y E[\Delta^y\Phi(\sigma,\eta)].
\end{eqnarray*}

Note that only updates on vertices $v$ and $w \in \mathcal{N}_v$ can affect the value of
$E[\Delta\Phi(\sigma,\eta)]$ because updates on other vertices will have the same outcomes due to coupling.

If $v$ is selected for update and since we use the same coin toss for both Markov chains, then
$\sigma(t+1)=\eta(t+1)$ and $\Phi(\sigma(t+1),\eta(t+1))=0$. Thus $E[\Delta^v\Phi(\sigma,\eta)]=-f(v)$.

If $w\in \mathcal{N}_v$ is selected for update, under schedule $\eta,$ $w$ can only take value $0$ because
$w$ has a neighbor (i.e., $v$) belongs to $\eta$. While under schedule $\sigma,$ there are
two cases: \\
(1) if $w$ has a neighbor in $\sigma$, then $w$ can only take value $0$; \\
(2) if $w$ has no neighbors in $\sigma$, $w$ can take value $1$ with probability
$\frac{\lambda_w}{1+\lambda_w}$ and value $0$ otherwise.

Hence for $w\in \mathcal{N}_v$, $E[\Delta^w\Phi(\sigma,\eta)] \leq \frac{\lambda_w}{1+\lambda_w}f(w).$
Summing up all contributions we have (\ref{eq:dist_dif}).
\end{proof}

Now we are ready to prove Theorem \ref{thm:main}.

\begin{proof} (Theorem \ref{thm:main})
For any pair of adjacent schedules $(\sigma, \eta)\in S$ that differ only at some vertex $v\in V$, from
(\ref{eq:dist_dif}) and (\ref{eq:theta}) we have:
$$E[\Delta\Phi(\sigma,\eta)]  \leq  -\theta \leq
-\frac{\theta}{M}\Phi(\sigma,\eta),$$
where we use the fact that $\Phi(\sigma,\eta) = f(v) \leq M.$
Therefore,
$$E[\Phi(\sigma(t+1),\eta(t+1))|\sigma(t)=\sigma,\eta(t)=\eta]\leq\big(1-\frac{\theta}{M}\big)\Phi(\sigma,\eta).$$
By applying Theorem~\ref{theorem:pathcoupling} where $\beta= 1-\frac{\theta}{M}$ and $D=n\xi$, we have
(\ref{eq:dvboundPGD}) and the bound in (\ref{eq:mixingPGB}).
\end{proof}

\section{Low Delay CSMA Scheduling for Wireless Networks}  \label{sec:queuelength}

In this section we analyze the delay performance of PGD-CSMA. In time slot $t$, let $a_i(t)$ be the number of
packets arriving at link $i$, $\forall i$. Assume arrivals occur at the beginning of a time slot and are
i.i.d. with mean arrival rate $E[a_i(t)]=\nu_i \leq 1$. For simplicity, we assume Bernoulli arrivals, i.e.,
$a_{i}(t)\in\{0,1\}$ with $\Pr(a_{i}(t)=1)=\nu_i$. Note that our results hold for more general arrival
processes as long as the second moment $E[a_i^2(t)]<\infty$, $\forall i$.

Let $Q_i(t)$ be the queue length of link $i$ at the end of time slot $t$. Then it has the following dynamics:
\begin{eqnarray}
Q_i(t+1) =  [Q_i(t)+a_i(t+1)-\sigma_i(t+1)]_+ \label{eq:q_dynamics}
\end{eqnarray}
where $[Q]_+=Q$ if $Q\geq 0$ and $[Q]_+=0$ otherwise.

The \emph{capacity region} of the network is the set of all arrival rates $\boldsymbol \nu$ for which there
exists a scheduling algorithm that can stabilize the queues, i.e., the queues are bounded in some appropriate
stochastic sense depending on the arrival model used. In this paper, stability refers to the \emph{positive
recurrence} of the Markov chain. It is known (e.g., \cite{taseph92}) that the capacity region is given by
\begin{equation} \label{eq:capacity}
\Lambda^{o} = \{\boldsymbol{\nu} \geq\mathbf{ 0} \mbox{ }|\mbox{ }\exists \boldsymbol \mu\in
Co(\Omega),\boldsymbol{\nu} < \boldsymbol{\mu}\},
\end{equation}
where $Co(\Omega)$ is the \emph{convex hull} of the set of feasible schedules, i.e., $\boldsymbol \mu \in
Co(\Omega)$ if $\boldsymbol \mu = \sum_{\sigma\in \Omega}t_{\sigma}\sigma$, where $\sum_{\sigma}t_{\sigma}=1$
and $t_{\sigma}\geq 0$ can be viewed as the fraction of time that schedule $\sigma$ is used. When dealing
with vectors, inequalities are interpreted component-wise. We also define
$$\Lambda := \{\boldsymbol{\nu} \geq\mathbf{ 0} \mbox{ }|\mbox{ }\boldsymbol{\nu} \in Co(\Omega)\}.$$
%We say that a scheduling algorithm is \emph{throughput-optimal}, or achieves the \emph{maximum throughput}, if
%it can keep the network stable for all arrival rates in $\Lambda$.

We use the parallel Glauber dynamics to generate the transmission schedule $\sigma(t)$ in every time slot. We
will show that a small mixing time of PGD implies a small average queue length in a wireless network under
PGD-CSMA. By Little's law this also implies a small average delay in the network.

\subsection{Throughput and Fugacities}

The following theorem is slightly extended from \cite{jiawal08}.
\begin{theorem}
\emph{Given any ${\boldsymbol \nu}\in\Lambda^{o}$, there exist suitable fugacities ${\boldsymbol \lambda}$
such that for every link $i$, its mean service rate $s_{i}:=\sum_{\sigma:i\in\sigma}\pi(\sigma)$ is equal to
the mean arrival rate $\nu_{i}$ under PGD-CSMA with fugacities ${\boldsymbol \lambda}$, where $\pi(\sigma)$
is given in (\ref{eq:productformPGD}). In other words, the service rates of PGD-CSMA can exactly meet the
arrival rates at all links. Denote $r_{i}^{*}:=\log(\lambda_{i})$. The vector ${\bf r}^{*}=(r_{i}^{*})$ is
the solution of the convex optimization problem $\max_{{\bf r}}F({\bf r};{\boldsymbol \nu})$ where
\begin{equation}
F({\bf
r};{\boldsymbol{\nu}})=\sum_{i}\nu_{i}r_{i}-\log(\sum_{\sigma\in\Omega}\exp(\sum_{i}\sigma_{i}r_{i})).\label{eq:F}
\end{equation}}
\end{theorem}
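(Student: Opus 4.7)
The plan is to recognize the claim as an instance of the standard maximum-entropy / exponential-family duality. Setting $r_i := \log\lambda_i$, the stationary distribution in (\ref{eq:productformPGD}) rewrites as
\[
\pi_{\mathbf r}(\sigma) = \frac{\exp\bigl(\sum_i \sigma_i r_i\bigr)}{Z(\mathbf r)},\qquad Z(\mathbf r) := \sum_{\sigma\in\Omega}\exp\Bigl(\sum_i \sigma_i r_i\Bigr),
\]
so that the service rate at link $i$ is precisely the marginal $s_i(\mathbf r) = E_{\pi_{\mathbf r}}[\sigma_i] = \partial \log Z(\mathbf r)/\partial r_i$. With this notation the objective in (\ref{eq:F}) is $F(\mathbf r;\boldsymbol\nu) = \sum_i\nu_i r_i - \log Z(\mathbf r)$, and its stationarity condition $\nabla_{\mathbf r}F = \mathbf 0$ reads $s_i(\mathbf r)=\nu_i$ for every $i$. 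Producing the desired fugacities is therefore equivalent to producing a critical point of $F(\cdot;\boldsymbol\nu)$.

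Next I would record two standard facts. The Hessian of $\log Z$ equals the covariance matrix of $\sigma$ under $\pi_{\mathbf r}$ (a direct log-sum-exp computation), so $\log Z$ is convex, $F$ is concave in $\mathbf r$, and any critical point is automatically a global maximizer. Existence of such a critical point is the heart of the argument and is where the assumption $\boldsymbol\nu\in\Lambda^o$ enters. I would establish coercivity of $-F$: along any nonzero direction $\mathbf d$, $\log Z(\mathbf r + t\mathbf d)/t \to \max_{\sigma\in\Omega}\sum_i\sigma_i d_i$ as $t\to\infty$, whence
\[
F(\mathbf r + t\mathbf d;\boldsymbol\nu) = t\Bigl(\sum_i\nu_i d_i - \max_{\sigma\in\Omega}\sum_i\sigma_i d_i\Bigr) + O(1).
\]
Because $\boldsymbol\nu < \boldsymbol\mu$ componentwise for some $\boldsymbol\mu = \sum_\sigma t_\sigma\sigma \in Co(\Omega)$, a short support-function computation (using $\mathbf 0\in\Omega$ to handle negative coordinates of $\mathbf d$, and noting that a maximizer of $\sum_i\sigma_i d_i$ over $\Omega$ may be taken with $\sigma_i=0$ wherever $d_i\leq 0$) yields $\sum_i\nu_i d_i < \max_{\sigma\in\Omega}\sum_i\sigma_i d_i$ for every $\mathbf d\neq \mathbf 0$ (discarding a priori any link with $\nu_i=0$). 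Coercivity then forces $F(\cdot;\boldsymbol\nu)$ to attain its maximum at some finite $\mathbf r^*\in\mathbb R^n$, and setting $\lambda_i := e^{r_i^*}$ yields fugacities whose product-form stationary distribution has marginals exactly $\boldsymbol\nu$.

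The main obstacle I expect is precisely the coercivity / support-function step, since the other ingredients (reparametrization, concavity, reading off first-order conditions) are algebraic. Coercivity must translate the strict-interior assumption $\boldsymbol\nu\in\Lambda^o$ into a uniform gap between $\sum_i\nu_i d_i$ and $\max_{\sigma\in\Omega}\sum_i\sigma_i d_i$ over directions $\mathbf d$; this is essentially Slater's condition for the underlying maximum-entropy primal of maximizing $-\sum_\sigma p(\sigma)\log p(\sigma)$ over probability distributions $p$ on $\Omega$ subject to $\sum_{\sigma\ni i}p(\sigma)=\nu_i$ for all $i$, with $\mathbf r^*$ appearing as the vector of Lagrange multipliers. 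Once coercivity is in place, both parts of the theorem follow simultaneously: the stated characterization $\mathbf r^* = \arg\max_{\mathbf r}F(\mathbf r;\boldsymbol\nu)$ is just the concavity-plus-stationarity restatement of the marginal-matching condition $s_i(\mathbf r^*)=\nu_i$.
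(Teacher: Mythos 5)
Your proposal is correct: the paper itself gives no proof of this theorem (it is imported from \cite{jiawal08}, with only the remark that $\partial F({\bf r}^{*};{\boldsymbol \nu})/\partial r_{i}=\nu_{i}-s_{i}({\bf r}^{*})=0$ for all $i$), and the argument it relies on is exactly the exponential-family / maximum-entropy duality you describe, with the strict-interiority assumption ${\boldsymbol \nu}\in\Lambda^{o}$ used to make $F(\cdot;{\boldsymbol \nu})$ coercive so that the concave maximization is attained at a finite ${\bf r}^{*}$. Your treatment of the support-function step --- restricting the maximizing schedule to the coordinates where $d_{i}>0$ (valid because $\Omega$ is closed under taking subsets and contains the empty schedule) and discarding links with $\nu_{i}=0$, for which no finite fugacity can give $s_{i}=0$ --- is the only delicate point, and it is handled correctly.
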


\begin{remark}
\emph{A way to understand this result is that $\partial F({\bf r}^{*};{\boldsymbol \nu})/\partial
r_{i}=\nu_{i}-s_{i}({\bf r}^{*})=0,\forall i,$ where $s_{i}({\bf r}^{*})$ is the mean service rate of link
$i$ with the fugacity $\lambda_{j}=\exp(r_{j}^{*}), \forall j$.}
\end{remark}

Now we characterize the suitable fugacities ${\boldsymbol \lambda}$ when ${\boldsymbol \nu}$ is in a fraction
of the capacity region.

\begin{definition}
\emph{The interference degree $\chi_i$ of a link $i$ is the maximum number of links in its conflict set
$\mathcal{N}_i$ that can be active simultaneously. The interference degree of $G$ is defined as
$\chi=\max_{i\in V}\chi_i$.}
\end{definition}

%For example, in Fig. \ref{fig:3-link} (b), we have $\chi_1=1$, $\chi_2=2$, $\chi_3=1$. Therefore, the
%interference degree $\chi=2$ in that graph.

\begin{lemma} \label{lemma:fugacitybound}
\emph{For an arrival rate vector $\boldsymbol{\nu}\in \Lambda^{o}$, let
$\boldsymbol{\lambda}(\boldsymbol{\nu})$ be the vector of fugacities such that the mean service rate $s_i$ is
equal to the mean arrival rate $\nu_i$ for every link $i$ under PGD-CSMA. If $\boldsymbol{\nu} \in
\rho\Lambda^{o}$ (which means that $\frac{1}{\rho} \boldsymbol{\nu} \in \Lambda^{o}$) for some $\rho \leq
\frac{1}{\chi}$, then}
\begin{eqnarray}
\lambda_i(\boldsymbol{\nu}) \leq \frac{\rho}{1-\rho}, \mbox{ }\forall i.
\end{eqnarray}
\end{lemma}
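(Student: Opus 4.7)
The plan is to express $\lambda_i$ explicitly in terms of $\nu_i$ and the ``blocking probability'' $f_i := \Pr_\pi(\exists j \in \mathcal N_i : \sigma_j = 1)$, to control $f_i$ by a union bound, and then to convert the hypothesis $\boldsymbol \nu/\rho \in \Lambda^o$ with $\rho \le 1/\chi$ into a scalar per-link inequality. Using the product form of Theorem~\ref{theorem:productform}, conditioning on the event that no neighbor of $i$ is active yields $\Pr_\pi(\sigma_i = 1 \mid \sigma_{\mathcal N_i} = \mathbf 0) = \lambda_i/(1+\lambda_i)$, so
\[
\nu_i = s_i = \frac{\lambda_i}{1+\lambda_i}\,(1 - f_i) \qquad \text{and hence} \qquad \lambda_i = \frac{\nu_i}{1 - \nu_i - f_i}.
\]
Markov's inequality applied to the nonnegative integer random variable $\sum_{j \in \mathcal N_i} \sigma_j$ then gives $f_i \le \sum_{j \in \mathcal N_i} s_j = \sum_{j \in \mathcal N_i} \nu_j$. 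Thus it suffices to establish the deterministic inequality
\[
\frac{\nu_i}{\rho} + \sum_{j \in \mathcal N_i} \nu_j \;\le\; 1,
\]
because this implies $1 - \nu_i - f_i \ge \nu_i(1-\rho)/\rho$ and therefore $\lambda_i \le \rho/(1-\rho)$.

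To prove the scalar inequality I would use the asymmetrically weighted feasibility bound
\[
\chi_i\, \sigma_i + \sum_{j \in \mathcal N_i} \sigma_j \;\le\; \chi_i, \qquad \forall \sigma \in \Omega,
\]
which holds because either $\sigma_i = 1$ and all neighbors are $0$ (the LHS equals $\chi_i$), or $\sigma_i = 0$ and at most $\chi_i$ neighbors can be simultaneously active by the definition of $\chi_i$. Taking convex combinations extends this to $\chi_i \mu_i + \sum_{j \in \mathcal N_i} \mu_j \le \chi_i$ for every $\boldsymbol \mu \in Co(\Omega)$. Since $\boldsymbol \nu/\rho \in \Lambda^o$ provides some such $\boldsymbol \mu$ with $\boldsymbol \nu/\rho < \boldsymbol \mu$, and using $\rho \le 1/\chi \le 1/\chi_i$,
\[
\frac{\nu_i}{\rho} + \sum_{j \in \mathcal N_i} \nu_j \;<\; \mu_i + \rho \sum_{j \in \mathcal N_i} \mu_j \;\le\; \mu_i + \frac{1}{\chi_i} \sum_{j \in \mathcal N_i} \mu_j \;\le\; 1,
\]
with the corner case $\nu_i = 0$ giving $\lambda_i = 0$ directly.

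The main obstacle is spotting the right feasibility inequality. The naive bound $\mu_i + \sum_{j \in \mathcal N_i} \mu_j \le \max(1,\chi_i)$ is too weak as soon as $\chi_i \ge 2$: it only yields $\nu_i + \sum_j \nu_j \le \rho\chi$, not $\nu_i/\rho + \sum_j \nu_j \le 1$. The asymmetric weighting $\chi_i \sigma_i + \sum_j \sigma_j \le \chi_i$ is precisely what matches the asymmetry between $\nu_i/\rho$ (coming from $\lambda_i$) and $\sum_j \nu_j$ (coming from the union bound on $f_i$), and is what makes the hypothesis $\rho \le 1/\chi$ buy the per-link bound. Once this observation is in place the remaining manipulations are routine algebra.
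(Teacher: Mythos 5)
Your proposal is correct and follows essentially the same route as the paper: the identity $\nu_i = s_i = \frac{\lambda_i}{1+\lambda_i}(1-f_i)$ from the product form, the union bound $f_i \le \sum_{j\in\mathcal N_i}\nu_j$, and the interference-degree constraint $\chi_i\mu_i + \sum_{j\in\mathcal N_i}\mu_j \le \chi_i$ combined with $\rho\le 1/\chi$. The only (cosmetic) difference is that you justify that last constraint by a per-schedule inequality extended over $Co(\Omega)$, whereas the paper phrases it as a time-sharing argument about the fraction of time link $i$ is idle; the two are equivalent.
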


\begin{proof}
For easy notation we write $\boldsymbol{\lambda}(\boldsymbol{\nu})$ as $\boldsymbol{\lambda}$ in this proof.
Under PGD-CSMA with fugacities $\boldsymbol{\lambda}$, let $p_{i,0}$ be the steady-state probability that no
link in $\mathcal{N}_i$ is transmitting, i.e.,
$$p_{i,0}=\sum_{\sigma:\sigma_j=0,\forall j\in \mathcal{N}_i}
\pi(\sigma).$$ Using (\ref{eq:productformPGD}), it is not difficult to show that (a detailed derivation is
given in Appendix~\ref{proof:MRF})
\begin{eqnarray}
\nu_i=s_i=\frac{\lambda_i}{1+\lambda_i} p_{i,0}. \label{eq:MRF}
\end{eqnarray}

Since $1-p_{i,0}$ is the probability that at least one link in $\mathcal{N}_i$ is transmitting, a union bound
yields
\begin{eqnarray} \label{eq:pi0}
1-p_{i,0} \leq \sum_{j\in \mathcal{N}_i}s_j=\sum_{j\in \mathcal{N}_i}\nu_j.
\end{eqnarray}

On the other hand, note that $\boldsymbol{\nu}':=\frac{1}{\rho}\boldsymbol{\nu}\in \Lambda^{o}$, hence there
exists a scheduling algorithm which can serve $\boldsymbol{\nu}'$. Under that scheduling algorithm,
$1-\nu'_i$ is the fraction of time link $i$ is idle. The links in $\mathcal{N}_i$ can be served only when
link $i$ is idle, and in this case at most $\chi$ of them can be served. Therefore,
\begin{eqnarray} \label{eq:nuj}
\sum_{j\in \mathcal{N}_i}\nu'_{j} \leq \chi(1-\nu'_i).
\end{eqnarray}

Combining (\ref{eq:pi0}) and (\ref{eq:nuj}) we have
\begin{eqnarray}
1-p_{i,0} \leq \sum_{j\in \mathcal{N}_i}\nu_j \leq \rho \chi (1-\frac{\nu_i}{\rho}) \leq 1-
\frac{\nu_i}{\rho}
\end{eqnarray}
since $\rho \leq \frac{1}{\chi}$. Hence $\nu_i \leq \rho p_{i,0}$ which implies
$\frac{\lambda_i}{1+\lambda_i} \leq \rho$ and $\lambda_i \leq \frac{\rho}{1-\rho}$.
\end{proof}

Note that for any graph its interference degree $\chi$ is smaller than or equal to its maximum vertex degree
$\Delta$, hence we have:
\begin{corollary} \label{coro:fugacities}
\emph{If the arrival rate vector $\boldsymbol{\nu} \in \frac{1}{\Delta}\Lambda^{o}$, then}
\begin{eqnarray}
\lambda_i(\boldsymbol{\nu}) \leq \frac{1}{\Delta-1}, \mbox{ }\forall i.
\end{eqnarray}
\end{corollary}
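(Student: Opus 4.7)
The plan is to derive the corollary directly from Lemma~\ref{lemma:fugacitybound} by setting $\rho = 1/\Delta$, after verifying the required hypothesis $\rho \leq 1/\chi$.

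The first step is to observe that the interference degree satisfies $\chi \leq \Delta$. This follows from the definitions: for each vertex $v$, the quantity $\chi_v$ counts a maximum set of simultaneously active links drawn from $\mathcal{N}_v$, so trivially $\chi_v \leq |\mathcal{N}_v| = d_v \leq \Delta$, and taking the maximum over $v$ gives $\chi \leq \Delta$. Consequently $1/\Delta \leq 1/\chi$, so the assumption $\boldsymbol{\nu} \in \frac{1}{\Delta} \Lambda^o$ implies $\boldsymbol{\nu} \in \rho \Lambda^o$ with $\rho = 1/\Delta \leq 1/\chi$, which is exactly the condition invoked by Lemma~\ref{lemma:fugacitybound}.

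The second step is a routine algebraic substitution: the lemma then yields $\lambda_i(\boldsymbol{\nu}) \leq \rho/(1-\rho)$ for all $i$, and substituting $\rho = 1/\Delta$ gives
\[
\lambda_i(\boldsymbol{\nu}) \leq \frac{1/\Delta}{1-1/\Delta} = \frac{1}{\Delta-1},
\]
which is the stated bound.

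There is no real obstacle here; the only content is the comparison between $\chi$ and $\Delta$, and once that is recorded, the corollary is an immediate specialization of the preceding lemma. If anything, one might want to note the edge case $\Delta = 1$ (isolated edges or trivial graphs), where the bound is vacuous because $\Lambda^o$ reduces to the hypercube and the lemma still applies for any $\rho < 1$, but this need not interrupt the main argument.
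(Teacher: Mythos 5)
Your proposal is correct and follows exactly the paper's route: the paper dispenses with the corollary by the one-line observation that $\chi \leq \Delta$ and then specializes Lemma~\ref{lemma:fugacitybound} to $\rho = 1/\Delta$, which is precisely your two steps. The algebraic substitution $\frac{1/\Delta}{1-1/\Delta} = \frac{1}{\Delta-1}$ matches as well.
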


\subsection{Delay Performance of Fixed-Parameter PGD-CSMA}

Consider a wireless network with $n$ links and suppose that the maximum degree of the interference graph is
$\Delta$ which is independent of $n$. The queue length of link $i$, $Q_i(t)$, follows the dynamics in
(\ref{eq:q_dynamics}).

\begin{theorem}
\emph{If the arrival rate vector $\boldsymbol{\nu} \in \rho\Lambda^{o}$ for some constant
$\rho<\frac{1}{\Delta}$, then there exist fugacities $\boldsymbol{\lambda}$ such that in the steady state,
the expected queue length $E[Q_i(t)]=O(\log n)$ under PGD-CSMA with $\boldsymbol{\lambda}$.}
\end{theorem}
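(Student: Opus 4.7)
The plan is to combine Corollary~\ref{coro:c3} with a multi-step Lyapunov drift argument, exploiting the fact that with fixed fugacities the schedule chain $\{\sigma(t)\}$ is an autonomous Markov chain on the finite state space $\Omega$. Because $\rho<1/\Delta$, pick any $\rho'\in(\rho,1/\Delta)$ and define the dominating rate vector $\boldsymbol{\nu}'=(\rho'/\rho)\boldsymbol{\nu}\in\rho'\Lambda^{o}$. By the preceding throughput-and-fugacities theorem from \cite{jiawal08}, there exist $\boldsymbol{\lambda}$ under which the stationary mean service rate of each link equals $\nu_i'$, giving a strict slack $\epsilon_i:=\nu_i'-\nu_i>0$ (links with $\nu_i=0$ have $Q_i\equiv 0$ trivially). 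Lemma~\ref{lemma:fugacitybound} applied to $\boldsymbol{\nu}'$ and $\chi\le\Delta$ then gives $\lambda_i\le\rho'/(1-\rho')<1/(\Delta-1)$, and Corollary~\ref{coro:c3} yields $\bar{T}_{mix}=O(\log n)$ for this choice of parameters.

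Next, I would use the per-link Lyapunov function $V_i(t)=Q_i(t)^{2}$ and analyze its drift over a window of length $T=K\bar{T}_{mix}$, with $K$ a sufficiently large constant (depending on $\epsilon:=\min_i\epsilon_i$) chosen so that (\ref{eq:dvboundPGD}) gives $\|\mu_{\sigma_0,T}-\pi\|_{var}\le\epsilon/4$ uniformly in $\sigma_0\in\Omega$. Using $[x]_+^{2}\le x^{2}$ and $|a_i(\cdot)-\sigma_i(\cdot)|\le 1$, telescoping (\ref{eq:q_dynamics}) over the window yields
\begin{equation*}
E\bigl[V_i(t+T)-V_i(t)\,\big|\,Q_i(t)=q,\,\sigma(t)=\sigma_0\bigr]\le 2q\,E\Bigl[\sum_{r=1}^{T}\bigl(a_i(t+r)-\sigma_i(t+r)\bigr)\,\Big|\,\sigma(t)=\sigma_0\Bigr]+C T^{2}.
\end{equation*}
Since the arrivals are independent of the schedule chain, the inner sum equals $\nu_i T-\sum_{r=1}^{T}\Pr\bigl(\sigma_i(t+r)=1\mid\sigma(t)=\sigma_0\bigr)$; splitting this sum into an initial transient window (whose contribution is at most a constant multiple of $\bar{T}_{mix}$) and a post-mixing tail (on which each conditional probability differs from $\nu_i'$ by at most $\epsilon_i/4$), the inner expectation is at most $-\epsilon_i T/2+O(\bar{T}_{mix})$.

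Combining these estimates gives a $T$-step drift bound of the form $-\epsilon_i T\,q+C'T^{2}$, which is strictly negative once $q\ge c\,T/\epsilon_i=O(\log n)$. Applying the Foster--Lyapunov theorem in its $T$-step form to the joint Markov chain $(\sigma(t),Q_i(t))$ then establishes positive recurrence and produces $E[Q_i^{2}]=O((\log n)^{2})$, whence $E[Q_i]=O(\log n)$ by Jensen's inequality, as claimed. The main obstacle I anticipate is the clean conversion of the total-variation bound (\ref{eq:dvboundPGD}) into an additive error on $\sum_{r}\Pr\bigl(\sigma_i(t+r)=1\mid\sigma(t)=\sigma_0\bigr)$; I would handle this via a maximal coupling of $\sigma(\cdot)$ with a stationary copy $\tilde\sigma(\cdot)$ so that $\Pr\bigl[\sigma_i(t+r)\ne\tilde\sigma_i(t+r)\bigr]\le 2\|\mu_{\sigma_0,r}-\pi\|_{var}$, then summing in $r$ using the geometric decay supplied by (\ref{eq:dvboundPGD}).
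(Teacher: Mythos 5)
Your overall architecture is the paper's: bound the fugacities via Lemma~\ref{lemma:fugacitybound} so that Corollary~\ref{coro:c3} gives $\bar{T}_{mix}=O(\log n)$, run a $T$-step quadratic Lyapunov drift with $T$ proportional to $\bar{T}_{mix}$, and convert the mixing bound into an additive error on the empirical service rate by splitting the window into a transient of length $O(\bar{T}_{mix})$ and a mixed tail. The one step that does not go through as written is your choice of dominating rate vector $\boldsymbol{\nu}'=(\rho'/\rho)\boldsymbol{\nu}$. This makes the per-link slack $\epsilon_i=\nu_i'-\nu_i=(\rho'/\rho-1)\nu_i$ \emph{proportional to} $\nu_i$, so $\epsilon=\min_i\epsilon_i$ is not bounded away from zero uniformly in $n$: for a sequence of networks in which some link has $\nu_i\to 0$ (say $\nu_i=1/n$), your window length $T=K\bar{T}_{mix}$ with $K$ depending on $1/\epsilon$, and your final bound $E[Q_i]=O(T/\epsilon_i)$, both blow up polynomially in $n$, and the claimed $O(\log n)$ conclusion fails. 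The paper avoids this by inflating \emph{additively} rather than multiplicatively: it first establishes $\frac{1}{\Delta+1}\mathbf{1}\in\Lambda$ (via \cite{Sufficient}) and then uses the additivity lemma (Lemma~\ref{lemma:linearcap}) to show $\boldsymbol{\nu}+\frac{\epsilon'}{\Delta+1}\mathbf{1}\in(\rho+\epsilon')\Lambda\subset\frac{1}{\Delta}\Lambda$, so that the service rates $s_i=\nu_i+\frac{\epsilon'}{\Delta+1}$ have a slack $\epsilon'/(\Delta+1)$ that depends only on $\rho$ and $\Delta$, not on $n$ or on $\boldsymbol{\nu}$. That additive construction is the missing idea; without it (or an explicit hypothesis $\min_i\nu_i\ge\nu_{min}>0$, which the theorem does not assume) your argument proves a weaker statement.

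Two smaller points. First, your claim that the $T$-step drift of $V_i=Q_i^2$ "produces $E[Q_i^2]=O((\log n)^2)$" is not justified: a drift inequality of the form $E[V_i(t+T)-V_i(t)\mid\cdot]\le -\epsilon_i T\,Q_i(t)+C'T^2$ yields, after taking stationary expectations, a bound on the \emph{first} moment $E[Q_i]\le C'T/\epsilon_i$ (this is precisely how the paper concludes from (\ref{eq:Q^2})); bounding the second moment would require a drift on a cubic Lyapunov function. Fortunately the first-moment bound is all you need, so you should simply drop the detour through $E[Q_i^2]$ and Jensen. Second, for positive recurrence you need a Lyapunov function for the joint chain $(\mathbf{Q}(t),\sigma(t))$, i.e., the sum $\sum_i Q_i^2(t)$ as in the paper, rather than a purely per-link analysis; this is cosmetic. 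Your proposed maximal-coupling device for converting (\ref{eq:dvboundPGD}) into a bound on $\sum_r\Pr(\sigma_i(t+r)=1\mid\sigma(t))$ is more machinery than necessary — since $\{\sigma_i=1\}$ is a single event, $|\Pr(\sigma_i(t+r)=1\mid\sigma(t))-s_i|\le\|\mu_{\sigma(t),r}-\pi\|_{var}$ directly by the definition of variation distance, which is what the paper uses.
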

\begin{proof}
First, we have $\frac{1}{\Delta+1}{\bf 1}\in \Lambda$. This is a direct consequence of Theorem 1 in
\cite{Sufficient}. Choose a constant $\epsilon'\in (0,\frac{1}{\Delta}-\rho)$. Then
$\frac{\epsilon'}{\Delta+1}{\bf 1}\in \epsilon'\Lambda$. Next we state a lemma used in \cite{shatsetsi09}.
\begin{lemma} \label{lemma:linearcap}
\emph{If ${\boldsymbol \nu}_{a}\in\rho_{a}\Lambda$ and ${\boldsymbol \nu}_{b}\in\rho_{b}\Lambda$ , then
${\boldsymbol \nu}_{a}+{\boldsymbol \nu}_{b}\in(\rho_{a}+\rho_{b})\Lambda$.}
\end{lemma}

Since $\boldsymbol{\nu} \in \rho\Lambda^{o}$, there exists $\boldsymbol{\mu}>\boldsymbol{\nu}$ such that
$\boldsymbol{\mu} \in \rho\Lambda$. By Lemma \ref{lemma:linearcap}, we have $\boldsymbol{\mu} +
\frac{\epsilon'}{\Delta+1}{\bf 1}\in (\rho+\epsilon')\Lambda$. So $\boldsymbol{\nu} +
\frac{\epsilon'}{\Delta+1}{\bf 1}\in (\rho+\epsilon')\Lambda^{o} \subset \frac{1}{\Delta}\Lambda$.

Let $\boldsymbol{\lambda}$ be the fugacities such that $s_i=\nu_i+\frac{\epsilon'}{\Delta+1}>\nu_i,\forall
i$, where $s_i$ is the mean service rate for link $i$ under PGD-CSMA with (fixed) fugacities
$\boldsymbol{\lambda}$.\footnote{In order to find $\boldsymbol{\lambda}$ that yields the desired service
rates $s_i$'s, we can use the (fully-distributed) adaptive CSMA algorithms with diminishing step sizes as
proposed in \cite{liuyipro08,CDC_convergence,CSMA_IT}. In these algorithms, the links dynamically adjust the
fugacities and make them converge to the proper $\boldsymbol{\lambda}$. In this subsection, we quantify the
expected queue lengths after such $\boldsymbol{\lambda}$ is found and fixed.} By Lemma
\ref{lemma:fugacitybound}, we know that $\lambda_i \le (\rho+\epsilon')/[1-(\rho+\epsilon')] <
1/(\Delta-1),\forall i$. Therefore the PGD has a mixing time of $O(\log(n))$.

From (\ref{eq:q_dynamics}) and since $a_i(t+1)\geq 0$, we have
\begin{eqnarray}
Q_i(t+1)  \leq  [Q_i(t)-\sigma_i(t+1)]_++a_i(t+1).
\end{eqnarray}
More generally, consider $T$ consecutive time slots beginning from slot $t$, we have
\begin{eqnarray} \label{eq:qT}
Q_i(t+T) & \leq & [Q_i(t)-\sum_{k=1}^{T}\sigma_i(t+k)]_+ + \sum_{k=1}^{T}a_i(t+k) \nonumber \\
& = & [Q_i(t)-T\hat{s}_i]_+ + T \hat{a}_i,
\end{eqnarray}
where $\hat{a}_i=\frac{1}{T}\sum_{k=1}^{T}a_i(t+k)$ and $\hat{s}_i=\frac{1}{T}\sum_{k=1}^{T}\sigma_i(t+k)$
are the average arrival and service rates during the $T$ slots. The RHS of (\ref{eq:qT}) can be viewed as the
\emph{virtual} queue length after $T$ slots if we assume arrivals to link $i$ during the $T$ slots occur at
the end of time slot $t+T$, which is clearly larger than or equal to the actual queue length $Q_i(t+T)$.

From (\ref{eq:qT}) and note that $0\leq \hat{a}_i, \hat{s}_i\leq 1$, we have
\begin{eqnarray} \label{eq:Q^2}
Q_i^2(t+T) & \leq & \Big([Q_i(t)-T\hat{s}_i]_+ + T \hat{a}_i\Big)^2 \nonumber \\
& \leq & [Q_i(t)-T\hat{s}_i]^2 + 2T\hat{a}_iQ_i(t) + T^2 \hat{a}^2_i \nonumber \\
& = & Q_i^2(t) +2TQ_i(t)\Big(\hat{a}_i-\hat{s}_i\Big) + T^2(\hat{a}^2_i+\hat{s}^2_i). \nonumber\\
& \leq & Q_i^2(t) +2TQ_i(t)\Big(\hat{a}_i-\hat{s}_i\Big) + 2T^2.
\end{eqnarray}

Note that $(\mathbf{Q}(t), \sigma(t))$ evolves as a Markov chain. Define the Lyapunov function
$L(t)=\frac{1}{2}\sum_{i}Q_i^2(t)$. From (\ref{eq:Q^2}) and since $E[Q_i(t)\hat{s}_i|\mathbf{Q}(t),\sigma(t)]
= Q_i(t)E[\hat{s}_i|\sigma(t)]$, we have
\begin{eqnarray} \label{eq:drift}
& & E[L(t+T)-L(t)|\textbf{Q}(t),\sigma(t)] \nonumber \\
& \leq & T\sum_{i}Q_i(t)(\nu_i-E[\hat{s}_i|\sigma(t)])+nT^2.
\end{eqnarray}

Now let us compute $E[\hat{s}_{i}|\sigma(t)]$ for every link $i$:
\begin{eqnarray*}
 E[\hat{s}_{i}|\sigma(t)] & = & \frac{1}{T}\sum_{k=1}^{T}E[\sigma_{i}(t+k)|\sigma(t)]\\
& = & \frac{1}{T}\sum_{k=1}^{T}\Pr(\sigma_{i}(t+k)=1|\sigma(t)) \\
& = & \frac{1}{T}\sum_{k=1}^{T}\sum_{\sigma\in\Omega:\sigma_{i}=1}\mu_{\mathbf \sigma(t),k}(\sigma)
\end{eqnarray*}
where $\mu_{\sigma(t),k}$ is the distribution of the Markov chain of the schedules after $k$ slots if the
Markov chain starts with schedule $\sigma(t)$. Remember $s_{i} =
\sum_{\sigma\in\Omega:\sigma_{i}=1}\pi(\sigma)$, so
\begin{eqnarray*}
& &  |E[\hat{s}_{i}|\sigma(t)]-s_i| \\
& = & | \frac{1}{T}\sum_{k=1}^{T} \Big( \sum_{\sigma\in\Omega:\sigma_{i}=1}\mu_{\mathbf \sigma(t),k}(\sigma)-
\sum_{\sigma\in\Omega:\sigma_{i}=1}\pi(\sigma) \Big)| \\
& \leq & \frac{1}{T}\sum_{k=1}^{T}|\sum_{\sigma\in\Omega:\sigma_{i}=1}\mu_{\sigma(t),k}(\sigma)-\sum_{\sigma\in\Omega:\sigma_{i}=1}\pi(\sigma)|\\
& \leq & \frac{1}{T}\sum_{k=1}^{T}||\mu_{\sigma(t), k}-\pi||_{var}.
\end{eqnarray*}

Since $\lambda_i < \frac{1}{\Delta-1}$, from Theorem~\ref{thm:main} and Corollary~\ref{coro:c3} we know that
the parallel Glauber dynamics under $\boldsymbol{\lambda}$ is fast mixing, and we can find $D=n\xi$ and
$\beta=1-\frac{\theta}{M}<1$ such that
\begin{eqnarray*}
||\mu_{\sigma(t), k}-\pi||_{var} \leq \min\{1,D\beta^k\}.
\end{eqnarray*}

Let $T_0 = \lfloor  \frac{\log D}{\log \beta^{-1}}\rfloor$ so $D\beta^{T_0+1} \leq 1$, we have
\begin{eqnarray}
 |E[\hat{s}_{i}|\sigma(t)]-s_i|  & \leq & \frac{1}{T}\sum_{k=1}^{T_0} 1 + \frac{1}{T}\sum_{k=T_0+1}^{T} D \beta^{k} \nonumber \\
 & \leq & \frac{T_0+\frac{1}{1-\beta}}{T} \leq \frac{\bar{T}_{mix}}{T} \label{eq:mix_diff}
\end{eqnarray}
where $\bar{T}_{mix} \doteq \Big\lceil \frac{\log (De)}{1-\beta} \Big\rceil \geq T_0+\frac{1}{1-\beta}.$ If
we choose $T=\lceil \frac{\bar{T}_{mix}}{\epsilon} \rceil$ for some $\epsilon>0$, then
$|E[\hat{s}_{i}|\sigma(t)]-s_i| \leq \epsilon$ and
\begin{eqnarray} \label{eq:sibound}
E[\hat{s}_i|\sigma(t)] \geq s_i-\epsilon, \forall i.
\end{eqnarray}
Plugging into (\ref{eq:drift}), we have
\begin{eqnarray*}
& & E[L(t+T)-L(t)|\textbf{Q}(t),\sigma(t)]  \\
& \leq & -T \sum_{i}Q_i(t)(s_i-\nu_i-\epsilon) + nT^2
\end{eqnarray*}
where the right-hand-side is negative if $\epsilon < \min_{i}(s_i-\nu_i)$ and $||\mathbf{Q}(t)||$ is
sufficiently large. This establishes the \emph{negative drift} of $L(t)$. By the Foster-Lyapunov criterion,
$(\mathbf{Q}(t),\sigma(t))$ is positive recurrent.

In the steady state, taking expectations of both sides of (\ref{eq:Q^2}) and using (\ref{eq:sibound}), we
have
\begin{eqnarray*}
E[Q_i^2(t+T) - Q_i^2(t)] = 0 \leq  -2T E[Q_i(t)](s_i-\nu_i-\epsilon) + 2T^2
\end{eqnarray*}
which implies
\begin{eqnarray}
E[Q_i(t)]  \leq  \frac{T}{s_i-\nu_i-\epsilon} \leq \frac{\bar{T}_{mix}+1}{\epsilon(s_i-\nu_i-\epsilon)} =
\frac{4(\bar{T}_{mix}+1)}{(s_i-\nu_i)^2}\label{eq:per-queue-bound}
\end{eqnarray}
by choosing $\epsilon=\frac{s_i-\nu_i}{2}$ and $T=\lceil \frac{\bar{T}_{mix}}{\epsilon} \rceil$ in our
analysis of link $i$. Recall that $s_i-\nu_i=\epsilon'/(\Delta+1)$ is independent of $n$. Also, as we have
proved in Section~\ref{subsec:fastmixing}, for bounded-degree interference graphs, $\bar{T}_{mix}=O(\log n)$.
These facts combined imply that $E[Q_i(t)]=O(\log n)$ under PGD-CSMA.
\end{proof}

\subsection{Delay Performance of Dynamic-Parameter PGD-CSMA}

In this subsection we consider PGD-CSMA with dynamic parameters (fugacities). That is, unlike the last
subsection where we assumed that suitable fugacities have been found and fixed, here the fugacities are
dynamically adjusted based on the local queue length information.

Given an interference graph $G$ with $n$ links. Suppose $G$ has a maximum degree $\Delta$ and an interference
degree $\chi$ which are all independent of $n$.

Let $B$ be such that $\exp(B)\le\frac{1}{\chi-1}$. Assume that ${\boldsymbol
\nu}\in\frac{\exp(B_{\epsilon})}{1+\exp(B_{\epsilon})}\Lambda^{o}$ where $B_{\epsilon}:=B-\epsilon \in
(0,B)$, and that each element $\nu_{k}\ge\nu_{min}$ for some constant $\nu_{min}>0$.

\begin{proposition}
\label{pro:lambda}\emph{Under the above assumptions of ${\boldsymbol \nu}$, we have}
\[\nu_{min} \le \lambda_i({\boldsymbol \nu}) \le \exp (B_{\epsilon}),\forall i. \]
\end{proposition}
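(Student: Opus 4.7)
My plan is to combine the Markov-random-field identity (\ref{eq:MRF}), namely $\nu_i=\frac{\lambda_i}{1+\lambda_i}p_{i,0}$, with Lemma \ref{lemma:fugacitybound}, handling the two inequalities separately: the upper bound will come from Lemma \ref{lemma:fugacitybound} applied with a carefully chosen $\rho$, and the lower bound from the trivial fact that $p_{i,0}\le 1$.

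For the upper bound, I would set $\rho:=\frac{\exp(B_\epsilon)}{1+\exp(B_\epsilon)}$ so that the hypothesis ${\boldsymbol{\nu}} \in \frac{\exp(B_\epsilon)}{1+\exp(B_\epsilon)}\Lambda^{o}$ reads exactly $\boldsymbol{\nu}\in\rho\Lambda^o$. To invoke Lemma \ref{lemma:fugacitybound}, I must verify the sub-critical condition $\rho\le 1/\chi$; this is equivalent to $\exp(B_\epsilon)(\chi-1)\le 1$, which follows from $B_\epsilon<B$ together with the standing assumption $\exp(B)\le 1/(\chi-1)$. Lemma \ref{lemma:fugacitybound} then yields $\lambda_i\le \rho/(1-\rho)$, and a one-line algebraic simplification shows $\rho/(1-\rho)=\exp(B_\epsilon)$, giving the desired upper bound.

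For the lower bound, I would use (\ref{eq:MRF}) together with $p_{i,0}\le 1$ to write $\frac{\lambda_i}{1+\lambda_i}=\frac{\nu_i}{p_{i,0}}\ge \nu_i$, which rearranges to $\lambda_i\ge \frac{\nu_i}{1-\nu_i}\ge \nu_i\ge \nu_{min}$. Here $\nu_i<1$ is guaranteed because $\boldsymbol{\nu}$ lies strictly inside the capacity region $\rho\Lambda^o$ with $\rho\le 1/\chi\le 1$ (using that every $\boldsymbol{\mu}\in Co(\Omega)$ satisfies $\mu_i\le 1$).

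I do not anticipate a substantive obstacle: the argument is essentially bookkeeping that unpacks the parametrization $\rho=\exp(B_\epsilon)/(1+\exp(B_\epsilon))$ so it matches the hypothesis of Lemma \ref{lemma:fugacitybound}. The only step requiring genuine, if elementary, verification is the sub-critical condition $\rho\le 1/\chi$; once that is in place, both bounds follow immediately from results already established in the paper.
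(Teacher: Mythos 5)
Your proposal is correct and follows essentially the same route as the paper: the upper bound is Lemma \ref{lemma:fugacitybound} with $\rho=\exp(B_\epsilon)/(1+\exp(B_\epsilon))$, and the lower bound uses (\ref{eq:MRF}) with $p_{i,0}\le 1$ to get $\nu_i\le\lambda_i/(1+\lambda_i)\le\lambda_i$. The only difference is that you spell out the verification of $\rho\le 1/\chi$ and the identity $\rho/(1-\rho)=\exp(B_\epsilon)$, which the paper leaves implicit.
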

\begin{proof}
By Lemma \ref{lemma:fugacitybound}, we have $\lambda_i({\boldsymbol \nu}) \le \exp (B_{\epsilon})$. Also, by
(\ref{eq:MRF}), one has $\nu_i=s_i\le \lambda_i({\boldsymbol \nu})/[1+\lambda_i({\boldsymbol \nu})]$. Since
$\nu_i\ge \nu_{min}$, we have $\lambda_i({\boldsymbol \nu})\ge \nu_{min}$.
\end{proof}
\begin{remark}
\emph{Let ${\bf r}^*=\arg\max_{{\bf r}}F({\bf r};{\boldsymbol \nu})$ where $F({\bf r};{\boldsymbol \nu})$ is
defined in (\ref{eq:F}). Since $r^{*}_i=\log (\lambda_i ({\boldsymbol \nu}))$, we have $r^{*}_i \in
[r_{min},B_{\epsilon}],\forall i$, where $r_{min}:=\log (\nu_{min})$.}
\end{remark}

We further select $B$ such that if $\lambda_{k}\le\exp(B),\forall k$, then the parallel Glauber dynamics is
fast mixing and the mixing time is upper-bounded by $\bar{T}_{mix}=O(\log n)$ (by Corollary~\ref{coro:c3},
this can be achieved when $\exp(B)<\frac{1}{\Delta-1}$). Now we propose an algorithm to dynamically adjust
the fugacities of PGD-CSMA.

\medskip{}

\textbf{Algorithm 1:} The fugacities of the links, denoted by the vector ${\boldsymbol \lambda}$, are updated
every $T$ time slots, where
\begin{equation} T=\left\lceil
\bar{T}_{mix}\cdot\frac{4n\cdot(B-r_{min}+\alpha)}{\delta}\right\rceil =O(n\log n) \label{eq:T}
\end{equation}
with $\alpha:=\delta/n$ and
\begin{equation}
\delta:=\int_{r=B_{\epsilon}}^{B}[\frac{\exp(r)}{1+\exp(r)}-\frac{\exp(B_{\epsilon})}{1+\exp(B_{\epsilon})}]dr>0.
\label{eq:delta}
\end{equation}

Specifically, at the end of slot $jT,j=0,1,\dots$, link $k$ updates its fugacity to be
\begin{equation}
\lambda_{k}[j]=\exp(\min\{r_{k}[j],B\}),\forall k \label{eq:lambda_update}
\end{equation} with
\begin{equation}
r_{k}[j]:=\frac{\alpha}{T}Q_{k}[j]+r_{min}-\alpha\label{eq:r_update}
\end{equation}
where $Q_{k}[j]$ is the queue length of link $k$ at the end of slot $jT$, i.e., $Q_{k}[j]=Q_{k}(jT)$. Note
that $Q_{k}(t)$ follows the dynamics (\ref{eq:q_dynamics}). Also note that the fugacity vector ${\bf
\lambda}[j]$ is used for $T$ time slots (from slot $jT+1$ to $(j+1)T$, which we call {}``frame $j$'').

\begin{remark}
\emph{By (\ref{eq:lambda_update}) and (\ref{eq:r_update}), link $k$ increases its fugacity when its queue
length increases (unless the fugacity has reached $\exp (B)$). So link $k$ transmits more aggressively when
its queue builds up. Also, since $\lambda_{k}[j]\le \exp (B), \forall k$ at all time, PGD is fast mixing in
each frame of $T$ slots.}
\end{remark}
\begin{remark}
\emph{Algorithm 1 is designed such that ${\bf r}[j]$ is attracted towards ${\bf r}^* \in
[r_{min},B_{\epsilon}]^n$. So, by (\ref{eq:r_update}), the queue lengths are attracted towards an affine
function of ${\bf r}^{*}$ and are therefore stabilized; and by (\ref{eq:lambda_update}), ${\boldsymbol
\lambda}[j]$ is attracted to ${\boldsymbol \lambda}({\boldsymbol \nu})$.}
\end{remark}

\medskip{}

Algorithm 1 and the consequent proof techniques are quite different from existing works (e.g.,
\cite{CSMA_IT}). Specifically, unlike \cite{CSMA_IT}, the fugacities in Algorithm 1 are direct functions of
the queue lengths. Also, we derive a polynomial delay bound instead of an exponential bound in
\cite{CSMA_IT}. To do that, we apply the mixing time results in Section~\ref{sec:mixingPGD} and use a novel
Lyapunov function in the stability proof.

\begin{theorem}\label{thm:q_stable}
\emph{The queue length vector ${\bf Q}(t)$ is stable (i.e., positive recurrent) under Algorithm 1.}
\end{theorem}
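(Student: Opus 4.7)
The plan is to apply the Foster-Lyapunov criterion to the Markov chain $(\mathbf{Q}[j],\sigma(jT))$ sampled at frame boundaries. Two structural facts make the analysis tractable: (i) inside every frame the fugacities satisfy $\lambda_{k}[j]\le\exp(B)<1/(\Delta-1)$ by construction, so by Corollary~\ref{coro:c3} the parallel Glauber dynamics is fast mixing with $\bar T_{mix}=O(\log n)$; and (ii) the frame length $T=O(n\log n)$ is much larger than $\bar T_{mix}$, so the empirical per-frame service rate $\hat s_{k}$ of each link is within $O(\bar T_{mix}/T)$ of the stationary rate $s_{k}(\boldsymbol\lambda[j])$ under the product form (\ref{eq:productformPGD}), exactly as in the estimate (\ref{eq:mix_diff}). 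Combined with the queue recursion (\ref{eq:q_dynamics}) and the affine relation $r_{k}[j]=(\alpha/T)Q_{k}[j]+r_{\min}-\alpha$, this yields $E[r_{k}[j+1]-r_{k}[j]\mid\mathbf Q[j],\sigma(jT)]=\alpha(\nu_{k}-s_{k}(\tilde{\boldsymbol\lambda}[j]))+O(\alpha\bar T_{mix}/T)$ modulo $O(\alpha)$ boundary corrections from the $[\,\cdot\,]_{+}$, where $\tilde\lambda_{k}[j]=\exp(\min(r_{k}[j],B))$.

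Guided by the remark that $\mathbf r[j]$ is attracted to $\mathbf r^{*}$, the Lyapunov function I would use is the ``truncated free-energy gap''
$$L(\mathbf r)=\bigl[F(\mathbf r^{*};\boldsymbol\nu)-F(\min(\mathbf r,B);\boldsymbol\nu)\bigr]+\tfrac{1}{2}\sum_{k}(r_{k}-B)_{+}^{2},$$
which is nonnegative, vanishes only at $\mathbf r=\mathbf r^{*}$, and is coercive in $\mathbf r$. The first bracket is strongly convex in $\min(\mathbf r,B)$ via strict concavity of $F$, while the quadratic penalty makes $L$ coercive in the clipped region, where the first bracket is frozen because $\min(\mathbf r,B)$ no longer moves. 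Using $\partial F/\partial r_{k}=\nu_{k}-s_{k}(\exp(\mathbf r))$ together with the drift estimate above, a Taylor expansion gives schematically
\begin{align*}
&E[L[j+1]-L[j]\mid\mathbf Q[j],\sigma(jT)]\\
&\quad\lesssim -\alpha\sum_{k:r_{k}\le B}\bigl(\nu_{k}-s_{k}(\tilde{\boldsymbol\lambda}[j])\bigr)^{2}\\
&\qquad-\alpha\sum_{k:r_{k}>B}(r_{k}-B)\bigl(s_{k}(\tilde{\boldsymbol\lambda}[j])-\nu_{k}\bigr)+O(\alpha^{2}T)+O(\alpha\bar T_{mix}).
\end{align*}

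The main obstacle is ensuring that in the clipped region the algorithm still pushes $\mathbf r[j]$ back towards $\mathbf r^{*}$. This is precisely what the constant $\delta$ in (\ref{eq:delta}) is engineered for: it quantifies the strictly positive gap between the stationary activation probability at log-fugacity $r\in(B_{\epsilon},B]$ and its value at $B_{\epsilon}$, which together with the capacity hypothesis $\boldsymbol\nu\in\frac{\exp(B_{\epsilon})}{1+\exp(B_{\epsilon})}\Lambda^{o}$ and Lemma~\ref{lemma:fugacitybound} produces a matching arrival-rate slack $s_{k}(\tilde{\boldsymbol\lambda}[j])-\nu_{k}>0$ for every clipped link. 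The specific scales $\alpha=\delta/n$ and $T=\lceil 4n\bar T_{mix}(B-r_{\min}+\alpha)/\delta\rceil$ in (\ref{eq:T}) are exactly what is needed for the resulting negative drift to dominate both $O(\alpha^{2}T)$ (second-order remainder) and $O(\alpha\bar T_{mix})$ (mixing bias); on the unclipped region an analogous balance works via the strong-concavity term.

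Combining the two regions yields $E[L[j+1]-L[j]\mid\cdot]\le-\eta$ for some $\eta>0$ whenever $\mathbf Q[j]$ lies outside a bounded set, and Foster-Lyapunov gives positive recurrence of the frame-sampled chain; positive recurrence of $(\mathbf Q(t),\sigma(t))$ at the slot level then follows because the intra-frame increments are bounded. The most delicate technical step is the simultaneous calibration of the three small error sources — the second-order Taylor remainder, the mixing bias from (\ref{eq:mix_diff}), and the $[\,\cdot\,]_{+}$ truncation in (\ref{eq:q_dynamics}) — against the $\delta$-size negative drift that $\delta$ and Lemma~\ref{lemma:fugacitybound} deliver, which is the underlying reason for the specific constants selected in Algorithm~1.
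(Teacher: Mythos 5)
Your overall architecture (frame-sampled chain, mixing-time bias via (\ref{eq:mix_diff}), Foster--Lyapunov) matches the paper's, but the mechanism you rely on for negative drift in the clipped region has a genuine gap. You claim that for every clipped link $k$ (i.e., $r_k[j]>B$) the capacity hypothesis together with Lemma~\ref{lemma:fugacitybound} yields a pointwise slack $s_{k}(\tilde{\boldsymbol\lambda}[j])-\nu_{k}>0$. That is not what those results give. The contrapositive of Lemma~\ref{lemma:fugacitybound} only says that if $\lambda_k=\exp(B)$ then the \emph{whole vector} ${\bf s}(\tilde{\boldsymbol\lambda}[j])$ lies outside $\rho\Lambda^{o}$ for $\rho<\exp(B)/(1+\exp(B))$; it says nothing about the individual coordinate $s_k$, which can be depressed below $\nu_k$ by large fugacities at the neighbors of $k$. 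The paper's Lemma~\ref{lem:gap} does produce a coordinatewise slack, but only at the constrained maximizer $\hat{{\bf r}}(\bar r_k)$ of (\ref{eq:max-func}), where by first-order optimality $s_{k'}=\nu_{k'}$ for all $k'\ne k$ and hence ${\bf s}-\boldsymbol\nu$ is a multiple of ${\bf e}_k$ --- that is precisely the structure that lets Lemma~\ref{lemma:linearcap} convert ``outside $\rho\Lambda$'' into a lower bound on $s_k-\nu_k$. At a generic state ${\bf r}[j]$ this structure is absent, so your second (clipped) sum has no guaranteed sign; and since outside any compact set of queue lengths some coordinate must be clipped, this is exactly the regime where your drift bound must carry the argument. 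The first (unclipped) sum $-\alpha\sum(\nu_k-s_k)^2$ cannot rescue it, as it can vanish while clipped coordinates are arbitrarily large.

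The paper circumvents this by choosing the Lyapunov function (\ref{eq:Lya}), whose gradient (\ref{eq:L_derivative}) is exactly $r_k'-r_k^*$ with $r_k'=\min\{r_k,B\}$, so the first-order drift term becomes $\sum_k \frac{\partial F({\bf r}')}{\partial r_k'}(r_k'-r_k^*)\le F({\bf r}')-F({\bf r}^*)\le-\delta$ --- a single \emph{global} concavity inequality plus Lemma~\ref{lem:gap}, requiring no per-coordinate sign information. If you want to keep your free-energy-based $L$, you would need to replace the per-link slack claim by an argument of this global type. Two further points to tighten: (i) your ``$O(\alpha)$ boundary corrections'' from the $[\cdot]_+$ are not harmless as stated, since an $O(\alpha)$ perturbation of $r_k$ changes $L$ by $O(\alpha\,\|\nabla L\|)$, which is the same order as the drift; the paper instead proves the monotone comparison $L_k(r_k[j+1])\le L_k(\tilde r_k[j+1])$ by showing that truncation can only occur when $r_k[j+1]\le r_{min}\le r_k^*$, where $L_k$ is decreasing. (ii) Your second-order remainder is written as $O(\alpha^2T)$; per frame the increment of ${\bf r}$ is $O(\alpha)$ per coordinate, and with the paper's separable quadratic the remainder is $\tfrac12 n\alpha^2$, which with $\alpha=\delta/n$ is dominated by the $\alpha\delta$ drift --- with your non-separable $F$-based term the Hessian is not the identity and this calibration needs to be redone.
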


We first need a lemma.
\begin{lemma}\label{lem:gap}
\emph{For any vector ${\bf r}$ with some element $r_{k}\ge B$, we have $F({\bf r};{\boldsymbol \nu})\le
F({\bf r}^{*};{\boldsymbol \nu})-\delta$, where $\delta>0$ is defined in (\ref{eq:delta}). Note that $\delta$
is independent of $n$.}
\end{lemma}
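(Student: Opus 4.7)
The plan is to reduce the multivariate question to a one-dimensional analysis in the $k$-th coordinate by taking a supremum over the remaining coordinates, then exploit the envelope theorem together with the MRF identity.

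Define the concave profile $h(r_k) := \sup_{\mathbf{r}_{-k}} F(r_k, \mathbf{r}_{-k}; \boldsymbol{\nu})$. Jointly concave maximization of $F$ implies $h$ is concave in $r_k$, attains its maximum at $r_k^*$ with $h(r_k^*) = F(\mathbf{r}^*)$, and is nonincreasing on $[r_k^*, \infty)$. Lemma~\ref{lemma:fugacitybound} applied with $\rho = \exp(B_\epsilon)/(1+\exp(B_\epsilon))$ gives $\lambda_k(\boldsymbol{\nu}) \leq \exp(B_\epsilon)$, so $r_k^* \leq B_\epsilon < B$. Hence for any $\mathbf{r}$ with $r_k \geq B$ we have $F(\mathbf{r}) \leq h(r_k) \leq h(B)$, and the claim reduces to the one-dimensional gap $h(r_k^*) - h(B) \geq \delta$.

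By the envelope theorem, $h'(u) = \nu_k - s_k(\mathbf{r}^{(u)})$, where $\mathbf{r}^{(u)}$ is the maximizer of $F(u,\mathbf{r}_{-k};\boldsymbol{\nu})$ over $\mathbf{r}_{-k}$; the first-order conditions force $s_i(\mathbf{r}^{(u)}) = \nu_i$ for every $i \neq k$. The MRF identity~(\ref{eq:MRF}) then yields $s_k(\mathbf{r}^{(u)}) = \frac{e^u}{1+e^u}\,p_{k,0}(\mathbf{r}^{(u)})$. Because the integrand is nonnegative on $[r_k^*, B]$ by concavity of $h$,
\[ h(r_k^*) - h(B) = \int_{r_k^*}^B [s_k(\mathbf{r}^{(u)}) - \nu_k]\, du \geq \int_{B_\epsilon}^B [s_k(\mathbf{r}^{(u)}) - \nu_k]\, du. \]
The explicit form of $\delta$ suggests the pointwise target $s_k(\mathbf{r}^{(u)}) - \nu_k \geq \frac{e^u}{1+e^u} - \frac{\exp(B_\epsilon)}{1+\exp(B_\epsilon)}$ on $[B_\epsilon, B]$, whose integral equals $\delta$ by construction; using $\nu_k \leq \exp(B_\epsilon)/(1+\exp(B_\epsilon))$ this is equivalent to a lower bound on $p_{k,0}(\mathbf{r}^{(u)})$ that must be extracted from the active constraints $s_j(\mathbf{r}^{(u)}) = \nu_j$ for $j \in \mathcal{N}_k$ together with the capacity-region hypothesis.

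The main obstacle is this last pointwise estimate: the naive union bound $p_{k,0} \geq 1 - \sum_{j\in \mathcal{N}_k} \nu_j$ loses a factor $\chi$ and is already insufficient for interference degree exceeding one, so the argument must use the product-form coupling at $\mathbf{r}^{(u)}$ rather than treating the events $\{\sigma_j=1\}$ as independent. The clean way to close the gap is to pass to the variable $x := e^u\,\gamma(\mathbf{r}^{(u)}_{-k})$, where $\gamma = Z_1/Z_0$ from the decomposition $Z = Z_0 + e^{r_k} Z_1$; then the gap $F(\mathbf{r}_{-k}^*, r_k) - F(\mathbf{r})$ rewrites as $\psi(x,\nu_k) := -H_B(\nu_k) + \log(1+x) - \nu_k\log x$, which is convex in $\log x$ with minimum zero at $x = \nu_k/(1-\nu_k)$. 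The constraints $\gamma \leq 1$ and $r_k^* \leq B_\epsilon$ confine $x$ to a range whose worst case coincides with the boundary configuration $\nu_k = \exp(B_\epsilon)/(1+\exp(B_\epsilon))$ and $\gamma(\mathbf{r}^*_{-k}) = 1$, at which $\psi$ equals $\delta$ exactly; showing this boundary is the binding case (using the capacity-region constraint to control how far $x$ can deviate when $\nu_k < p$) is the technical crux that completes the proof.
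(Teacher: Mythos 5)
Your reduction to the one-dimensional profile $h(\bar r_k)=\max_{{\bf r}_{-k}}F(\bar r_k,{\bf r}_{-k};{\boldsymbol\nu})$, the envelope-theorem identity $h'(u)=\nu_k-s_k(\hat{\bf r}(u))$, the observation that the first-order conditions pin $s_{k'}(\hat{\bf r}(u))=\nu_{k'}$ for $k'\ne k$, and the bound $r_k^*\le B_\epsilon$ all match the paper's argument. You also correctly identify the exact pointwise estimate that makes the integral equal $\delta$, namely
$s_k(\hat{\bf r}(u))-\nu_k\ \ge\ \frac{e^{u}}{1+e^{u}}-\frac{\exp(B_\epsilon)}{1+\exp(B_\epsilon)}$ for $u\in[B_\epsilon,B]$. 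But you do not prove it: your proposal explicitly defers the "technical crux" to an unproven claim that a boundary configuration is binding for the function $\psi(x,\nu_k)$ after a change of variables, and that sketch is not obviously closable (as $u$ varies, the maximizer $\hat{\bf r}_{-k}(u)$ moves to keep $s_j=\nu_j$ for $j\ne k$, so the problem does not reduce to a single-coordinate convexity statement without further input from the capacity region). This is a genuine gap.

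The paper closes exactly this step with a short argument you are missing, built from two lemmas you already have in hand. First, the contrapositive of Lemma~\ref{lemma:fugacitybound}: at $\hat{\bf r}(u)$ the fugacity of link $k$ is $e^{u}$, so the achieved service-rate vector ${\bf s}(\hat{\bf r}(u))$ cannot lie in $\rho\Lambda$ for any $\rho<\frac{e^{u}}{1+e^{u}}$ (otherwise the lemma would force $\lambda_k\le\rho/(1-\rho)<e^{u}$). Second, since ${\bf s}(\hat{\bf r}(u))={\boldsymbol\nu}+(s_k-\nu_k){\bf e}_k$ with ${\boldsymbol\nu}\in\frac{\exp(B_\epsilon)}{1+\exp(B_\epsilon)}\Lambda$ and ${\bf e}_k\in\Lambda$, the additivity Lemma~\ref{lemma:linearcap} places ${\bf s}(\hat{\bf r}(u))$ in $\bigl[\frac{\exp(B_\epsilon)}{1+\exp(B_\epsilon)}+(s_k-\nu_k)\bigr]\Lambda$. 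Comparing the two containments forces $\frac{\exp(B_\epsilon)}{1+\exp(B_\epsilon)}+(s_k-\nu_k)\ge\frac{e^{u}}{1+e^{u}}$, which is precisely the pointwise bound; integrating over $[B_\epsilon,B]$ gives $\delta$ as in (\ref{eq:delta}), and the case $r_k>B$ follows by concavity of $F$ along the segment to ${\bf r}^*$ (your monotonicity-of-$h$ argument also suffices there). Your instinct that the union bound and the MRF identity (\ref{eq:MRF}) alone are too lossy is right; the missing idea is to argue in the capacity region rather than with $p_{k,0}$ directly.
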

\begin{proof}
We first show that $F^(\tilde{{\bf r}};{\boldsymbol \nu})\le F({\bf r}^{*};{\boldsymbol \nu})-\delta$ if
$\tilde{r}_{k}=B$ for some $k$. Denote
\begin{equation}
F_{k}(\bar{r}_{k};{\boldsymbol \nu})=\max_{{\bf r}_{-k}}F({\bf r}_{-k},r_{k}=\bar{r}_{k};{\boldsymbol
\nu}).\label{eq:max-func}
\end{equation}
Then clearly $F(\tilde{{\bf r}};{\boldsymbol \nu})\le F_{k}(B;{\boldsymbol \nu})$. So it is sufficient to
prove \begin{equation} F_{k}(B;{\boldsymbol \nu})\le F({\bf r}^{*};{\boldsymbol
\nu})-\delta.\label{eq:suff}\end{equation}

Denote the solution of RHS of (\ref{eq:max-func}) by $\hat{{\bf r}}_{-k}(\bar{r}_{k})$, and let $\hat{{\bf
r}}(\bar{r}_{k}):=(r_{k}=\bar{r}_{k},{\bf r}_{-k}=\hat{{\bf r}}_{-k}(\bar{r}_{k}))$. Then, the envelope
theorem implies that
\begin{equation}
dF_{k}(\bar{r}_{k};{\boldsymbol \nu})/dr_{k}=\partial F(\hat{{\bf r}}(\bar{r}_{k});{\boldsymbol
\nu})/\partial r_{k}=\nu_{k}-s_{k}(\hat{{\bf r}}(\bar{r}_{k})). \label{eq:derivative-max-func}
\end{equation}

By the definition of $\hat{{\bf r}}_{-k}(\bar{r}_{k})$, we know that for any $k'\ne k$, $\partial F(\hat{{\bf
r}}(\bar{r}_{k});{\boldsymbol \nu})/\partial r_{k'}=\nu_{k'}-s_{k'}(\hat{{\bf r}}(\bar{r}_{k}))=0$, so
$s_{k'}(\hat{{\bf r}}(\bar{r}_{k}))=\nu_{k'}$. Therefore, ${\bf s}(\hat{{\bf r}}(\bar{r}_{k}))-{\boldsymbol
\nu}=(s_{k}(\hat{{\bf r}}(\bar{r}_{k}))-\nu_{k})\cdot{\bf e}_{k}$.

Note that ${\boldsymbol \nu}\in\frac{\exp(B_{\epsilon})}{1+\exp(B_{\epsilon})}\Lambda$. Given a
$\bar{r}_{k}\in(B_{\epsilon},B]$, by Lemma \ref{lemma:fugacitybound} we know that ${\bf s}(\hat{{\bf
r}}(\bar{r}_{k}))\notin\rho\Lambda$ for any $\rho<\frac{\exp(\bar{r}_{k})}{1+\exp(\bar{r}_{k})}$. So
$s_{k}(\hat{{\bf r}}(\bar{r}_{k}))-\nu_{k}>0$.

Since $(s_{k}(\hat{{\bf r}}(\bar{r}_{k}))-\nu_{k})\cdot{\bf e}_{k}\in(s_{k}(\hat{{\bf
r}}(\bar{r}_{k}))-\nu_{k})\Lambda$, using Lemma~\ref{lemma:linearcap} we have ${\bf s}(\hat{{\bf
r}}(\bar{r}_{k}))={\boldsymbol \nu}+(s_{k}(\hat{{\bf r}}(\bar{r}_{k}))-\nu_{k})\cdot{\bf
e}_{k}\in[\frac{\exp(B_{\epsilon})}{1+\exp(B_{\epsilon})}+s_{k}(\hat{{\bf r}}(\bar{r}_{k}))-\nu_{k}]\Lambda$.
Since ${\bf s}(\hat{{\bf r}}(\bar{r}_{k}))\notin\rho\Lambda$ for any
$\rho<\frac{\exp(\bar{r}_{k})}{1+\exp(\bar{r}_{k})}$, it must be that $s_{k}(\hat{{\bf
r}}(\bar{r}_{k}))-\nu_{k}\ge\frac{\exp(\bar{r}_{k})}{1+\exp(\bar{r}_{k})}-\frac{\exp(B_{\epsilon})}{1+\exp(B_{\epsilon})}$.
Using (\ref{eq:derivative-max-func}), one has $dF_{k}(\bar{r}_{k};{\boldsymbol
\nu})/dr_{k}\le\frac{\exp(B_{\epsilon})}{1+\exp(B_{\epsilon})}-\frac{\exp(\bar{r}_{k})}{1+\exp(\bar{r}_{k})}$.

Therefore, we have\begin{eqnarray*}
& & F_{k}(B;{\boldsymbol \nu}) \\
& = & F_{k}(B_{\epsilon};{\boldsymbol \nu})+\int_{\bar{r}_{k}=B_{\epsilon}}^{B}[\frac{\exp(B_{\epsilon})}{1+\exp(B_{\epsilon})}-\frac{\exp(\bar{r}_{k})}{1+\exp(\bar{r}_{k})}]d\bar{r}_{k}\\
 & \le & F({\bf r}^{*};{\boldsymbol \nu})-\int_{\bar{r}_{k}=B_{\epsilon}}^{B}[\frac{\exp(\bar{r}_{k})}{1+\exp(\bar{r}_{k})}-\frac{\exp(B_{\epsilon})}{1+\exp(B_{\epsilon})}]d\bar{r}_{k}.\end{eqnarray*}

So, (\ref{eq:suff}) holds with $\delta$ defined in (\ref{eq:delta}).

Finally, we need to show that $F(\tilde{{\bf r}};{\boldsymbol \nu})\le F({\bf r}^{*};{\boldsymbol
\nu})-\delta$ if $\tilde{r}_{k}>B$ for some $k$. Given such a $\tilde{{\bf r}}$, one can find a $\tilde{{\bf
r}}'$ on the line segment between $\tilde{{\bf r}}$ and ${\bf r}^{*}$ to satisfy $\tilde{r}_{k}'=B$. We
already know that $F(\tilde{{\bf r}}';{\boldsymbol \nu})\le F({\bf r}^{*};{\boldsymbol \nu})-\delta$. By the
concavity of $F(\cdot;{\boldsymbol \nu})$, we have $F(\tilde{{\bf r}};{\boldsymbol \nu})\le F({\bf
r}^{*};{\boldsymbol \nu})-\delta$.
\end{proof}

Now we are ready to prove Theorem \ref{thm:q_stable}.

\begin{proof} (Theorem~\ref{thm:q_stable})
We first show that ${\bf r}[j]$ is stable. Equation (\ref{eq:r_update}) implies that $r_{k}[j]\ge
r_{min}-\alpha,\forall k,j$. Denote
$$r'_{k}[j]:=\min\{r_{k}[j],B\}.$$
Then $\lambda_{k}[j]=\exp(r'_{k}[j])$, and ${\bf r}'[j]\in\tilde{{\cal B}}:=[r_{min}-\alpha,B]^{n},\forall
j.$

We define a \emph{Lyapunov function}
\begin{equation}
L({\bf r}):=\sum_{k}L_{k}(r_{k})\label{eq:Lya}
\end{equation}
where
\begin{eqnarray*}
L_{k}(r_{k}): & = & (B-r_{k}^{*})(r_{k}-r_{k}^{*})I(r_{k}\ge B)+\\
 &  & \frac{1}{2}[(r_{k}-r_{k}^{*})^{2}+(B-r_{k}^{*})^{2}]I(r_{k}<B).
 \end{eqnarray*}
Then,
\begin{equation}
\frac{\partial L({\bf r})}{\partial r_{k}}=(B-r_{k}^{*})I(r_{k}\ge
B)+(r_{k}-r_{k}^{*})I(r_{k}<B).\label{eq:L_derivative}
\end{equation}

\begin{figure}
\includegraphics[width=8cm]{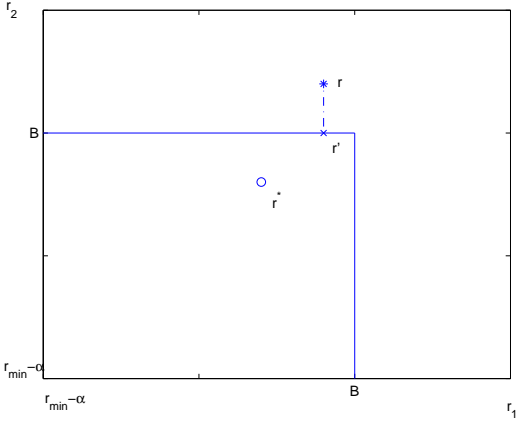}
\caption{An example when $n=2$.} \label{fig:rprojection}
\end{figure}

For simplicity, we write $F({\bf r};{\boldsymbol \nu})$ as $F({\bf r})$. For a ${\bf
r}\in[r_{min}-\alpha,\infty)^{n}$ but ${\bf r}\notin\tilde{{\cal B}}$, let ${\bf r}'$ be its projection on
$\tilde{{\cal B}}$ (i.e., $r'_{k}=\min\{r_{k},B\},\forall k$. See Fig. \ref{fig:rprojection} for an
illustration). Then by the concavity of $F({\bf r})$ we have
\begin{eqnarray}
\sum_{k}\frac{\partial F({\bf r}')}{\partial r_{k}'}\frac{\partial L({\bf r})}{\partial r_{k}} & = &
\sum_{k}\frac{\partial F({\bf r}')}{\partial r_{k}'}(r_{k}'-r_{k}^{*}) \nonumber \\
& \le & F({\bf r'})-F({\bf r}^{*}) \le-\delta\label{eq:strictly-negative}
\end{eqnarray}
where the last step has used Lemma \ref{lem:gap}.

We then claim that given any ${\bf r}[j]\in[r_{min}-\alpha,\infty)^{n}$,
\begin{equation}
L_{k}(r_{k}[j+1])\le L_{k}(\tilde{r}_{k}[j+1]),\forall k\label{eq:Lk_ineq}
\end{equation}
where $\tilde{r}_{k}[j+1]:=r_{k}[j]+\alpha[\hat{a}_{k}[j]-\hat{s}_{k}[j]]$, with $\hat{a}_{k}[j]$ denoting
the average arrival rate from slot $jT+1$ to $(j+1)T$, and $\hat{s}_{k}[j]$ denoting the average service rate
in the same interval with the fugacity vector ${\bf \lambda}[j]$.

To show (\ref{eq:Lk_ineq}) we consider two cases:

(i) If from slot $jT+1$ to $(j+1)T$, queue $k$ is not empty whenever it is scheduled to transmit in the CSMA
protocol, then $Q_{k}[j+1]=Q_{k}[j]+T[\hat{a}_{k}[j]-\hat{s}_{k}[j]]$. By (\ref{eq:r_update}), we have
$r_{k}[j+1]=r_{k}[j]+\alpha[\hat{a}_{k}[j]-\hat{s}_{k}[j]]=\tilde{r}_{k}[j+1]$. Then
$L_{k}(\tilde{r}_{k}[j+1])=L_{k}(r_{k}[j+1])$.

(ii) Otherwise, one has
\begin{equation}
Q_{k}[j+1]>Q_{k}[j]+T[\hat{a}_{k}[j]-\hat{s}_{k}[j]].\label{eq:served-empty-queue}
\end{equation}
Inequality (\ref{eq:qT}) is equivalent to
$Q_{k}[j+1]\le[Q_{k}[j]-T\cdot\hat{s}_{k}[j]]_{+}+T\cdot\hat{a}_{k}[j]$. Suppose that $Q_{k}[j+1]>T$, then
$[Q_{k}[j]-T\cdot\hat{s}_{k}(j)]_{+}\ge Q_{k}[j+1]-T\cdot\hat{a}_{k}[j]>T-T\cdot\hat{a}_{k}[j]\ge0$, which
implies that $Q_{k}[j]>T\cdot\hat{s}_{k}[j]$. But if this holds, queue $k$ never gets empty from slot $jT+1$
to $(j+1)T$, contradicting the assumption. Therefore, we have $Q_{k}[j+1]\le T$. Using this,
(\ref{eq:served-empty-queue}) and (\ref{eq:r_update}), one has\[ \tilde{r}_{k}[j+1]<r_{k}[j+1]\le r_{min},\]
which implies (\ref{eq:Lk_ineq}) since $r_{k}^{*}\ge r_{min}$. This completes the proof of
(\ref{eq:Lk_ineq}).

Inequality (\ref{eq:Lk_ineq}) immediately implies that \begin{equation} L({\bf r}[j+1])\le L(\tilde{{\bf
r}}[j+1]).\label{eq:L_neq}\end{equation}

Let ${\cal F}_{j}$ be the $\sigma$-field generated by $\{{\bf Q}[j'],{\bf r}[j'],{\bf
\sigma}[j']\},j'=0,1,2,\dots,j$ where ${\bf \sigma}[j']={\bf \sigma}(j'T)$ is the state of the CSMA Markov
chain (i.e., the schedule used in time slot $j'T$). In the following, we write the conditional expectation
$E(\cdot|{\cal F}_{j})$ simply as $E_{j}(\cdot)$. Using Taylor expansion,
\begin{eqnarray}
\Delta[j] & := & E_{j}[L({\bf r}[j+1])-L({\bf r}[j])]\nonumber \\
 & \le & E_{j}[L(\tilde{{\bf r}}[j+1])-L({\bf r}[j])]\nonumber \\
 & \le & \alpha\sum_{k}\Big\{[\nu_{k}-E_{j}(\hat{s}_{k}[j])]\frac{\partial L({\bf r}[j])}{\partial r_{k}[j]}\Big\}+\frac{1}{2}n\alpha^{2}\nonumber \\
 & \le & \alpha\sum_{k}\Big\{[\nu_{k}-s_{k}({\bf r}'[j])]\frac{\partial L({\bf r}[j])}{\partial r_{k}[j]}\Big\}+\alpha\sum_{k}\Big\{[s_{k}({\bf r}'[j])\nonumber \\
 &  & -E_{j}(\hat{s}_{k}[j])]\frac{\partial L({\bf r}[j])}{\partial r_{k}[j]}\Big\}+\frac{1}{2}n\alpha^{2}.\label{eq:drift1}\end{eqnarray}

By (\ref{eq:T}) and (\ref{eq:mix_diff}), we have
\begin{equation}
|E_{j}[\hat{s}_{k}[j]]-s_{k}({\bf r}'[j])|\le\delta/(4n\cdot d_{max}),\forall
j,k.\label{eq:mixing-bound}\end{equation} where $d_{max}:=B-r_{min}+\alpha$.

If ${\bf r}[j]\notin\tilde{{\cal B}}$, we use (\ref{eq:drift1}), (\ref{eq:mixing-bound}) and
(\ref{eq:strictly-negative}) to derive the following:
\begin{eqnarray*}
 &  & E_{j}[L({\bf r}[j+1])-L({\bf r}[j])]\\
 & \le & \alpha\sum_{k}\frac{\partial F({\bf r}'[j])}{\partial r_{k}'[j]}\frac{\partial L({\bf r}[j])}{\partial r_{k}[j]}+\alpha\sum_{k}(\frac{\delta}{4n\cdot d_{max}}d_{max})\\
 &  & +\frac{1}{2}n\alpha^{2}\\
 & = & \alpha\sum_{k}\frac{\partial F({\bf r}'[j])}{\partial r_{k}'[j]}\frac{\partial L({\bf r}[j])}{\partial r_{k}[j]}+\alpha\frac{\delta}{4}+\frac{1}{2}n\alpha^{2}\\
 & \le & \alpha(-\delta+\frac{\delta}{4}+\frac{1}{2}n\alpha) = -\alpha\delta/4
 \end{eqnarray*}
which establishes the negative drift of $L({\bf r}[j])$.

By the Foster-Lyapunov criterion, ${\bf r}[j]$ is stable, and by (\ref{eq:r_update}), ${\bf Q}[j]$ is also
stable. Since in each time slot, the change of each queue length is at most 1, we conclude that ${\bf Q}(t)$
is stable.
\end{proof}

\medskip{}

\begin{theorem}\label{thm:mean_q_dynamic}
\emph{Suppose that the arrival rate vector $\boldsymbol{\nu} \in \rho\Lambda^{o}$ where $\rho <
\frac{1}{\Delta}$, and each element $\nu_k\ge \nu_{min}$. Let $B_{\epsilon}=\log (\frac{\rho}{1-\rho})$ so
that $\boldsymbol{\nu} \in \frac{\exp (B_{\epsilon})}{1+\exp (B_{\epsilon})}\Lambda^{o}$. Let $B=\log
(\frac{\rho^{'}}{1-\rho^{'}})>B_{\epsilon}$ where the constant $\rho^{'}\in (\rho,\frac{1}{\Delta})$, so that
if $\lambda_{k}\le\exp(B)=\frac{\rho^{'}}{1-\rho^{'}}<\frac{1}{\Delta-1}, \forall k$, the parallel Glauber
dynamics has a mixing time of $O(\log (n))$.} \emph{Then, the queue lengths under Algorithm 1 satisfy}
\begin{equation}
\sum_{k}\bar{Q}_{k}=O(n^{3}\log n)\label{eq:average-q}
\end{equation}
\emph{where $\bar{Q}_{k}:=\limsup_{M\rightarrow\infty}\sum_{t=1}^{M}E(Q_{k}(t))/M$ is the  average expected
queue length at link $k$.}
\end{theorem}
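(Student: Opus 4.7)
The plan is to build on the stability proof of Theorem~\ref{thm:q_stable}: the Lyapunov function $L(\mathbf{r})$ already has a negative drift of size $\alpha\delta/4$ outside the bounded set $\tilde{{\cal B}}$, so I will upgrade this to a steady-state moment bound on $L$, translate it into a bound on the queue lengths via the linear relation (\ref{eq:r_update}), and finish by paying only a lower-order cost to pass from frame boundaries to all slots.

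First, I would derive an $O(n)$ bound on $E[L(\mathbf{r}[\infty])]$. The negative drift $-\alpha\delta/4 = -\delta^2/(4n)$ outside $\tilde{{\cal B}}$ is already in hand from the proof of Theorem~\ref{thm:q_stable}. Inside $\tilde{{\cal B}}$, both $L$ and its one-step drift are $O(n)$, because $L_{k} \le \frac{1}{2}(B - r_{min} + \alpha)^{2} = \frac{1}{2}d_{max}^{2}$ and there are $n$ links. The key point that makes a moment bound go through is that the one-step increment of $L$ is uniformly bounded in the state: since $|r_{k}[j+1] - r_{k}[j]| \le \alpha$ and $L_{k}$ is piecewise linear beyond $B$ with slope at most $d_{max}$, one has $|\Delta L| = O(n\alpha d_{max}) = O(\delta)$ even when some $r_{k}[j]$ is large. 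A standard moment version of the Foster--Lyapunov criterion (negative drift outside a set on which $L$ is bounded, plus uniformly bounded increments) then yields $E[L(\mathbf{r}[\infty])] = O(n)$.

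Second, I would use the shape of $L$ to extract a bound on $E[\sum_{k} r_{k}[j]]$. The crucial identity is that for $r_{k} \ge B$,
\begin{equation*}
L_{k}(r_{k}) \,=\, (B - r_{k}^{*})(r_{k} - r_{k}^{*}) \,\ge\, (B - B_{\epsilon})(r_{k} - B) \,=\, \epsilon(r_{k} - B),
\end{equation*}
where $\epsilon := B - B_{\epsilon}$ is independent of $n$. This yields $\sum_{k}(r_{k}[j] - B)_{+} \le L(\mathbf{r}[j])/\epsilon$, and combining with the trivial $r_{k}[j] \le B + (r_{k}[j] - B)_{+}$ gives $E[\sum_{k} r_{k}[j]] = O(n)$. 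Plugging into (\ref{eq:r_update}) and recalling $T/\alpha = O(n^{2}\log n)$, I obtain $E[\sum_{k} Q_{k}[j]] = (T/\alpha) \cdot O(n) = O(n^{3}\log n)$ at every frame boundary.

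Finally, since each queue changes by at most one per slot and frames have length $T = O(n\log n)$, the discrepancy between $\sum_{k} Q_{k}(t)$ and $\sum_{k} Q_{k}[\lfloor t/T\rfloor]$ is at most $nT = O(n^{2}\log n)$, which is dominated by the frame-boundary bound. Positive recurrence from Theorem~\ref{thm:q_stable} then lets me identify the time average $\sum_{k}\bar{Q}_{k}$ with the stationary expectation, concluding (\ref{eq:average-q}). The main obstacle is the moment bound in the first step: one must check that the uniform increment bound $D$ does not blow up with $r_{k}$ (the linearisation of $L_{k}$ beyond $B$ is exactly what keeps $D = O(\delta)$), and that the mixing-error contribution to the drift, bounded by $\delta/(4n\cdot d_{max})$ in (\ref{eq:mixing-bound}), is genuinely absorbed into the $-\alpha\delta/4$ drift of Theorem~\ref{thm:q_stable} rather than resurfacing in the moment estimate.
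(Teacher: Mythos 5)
Your proposal is correct and follows essentially the same route as the paper: the paper's proof is precisely your step one carried out explicitly, via a drift computation on $G^{2}(\mathbf{r})=\bigl([L(\mathbf{r})-\bar{L}]_{+}\bigr)^{2}$ with increment bound $c=n\alpha d_{max}=O(1)$ and drift $-\alpha\delta/4=O(1/n)$, yielding a Ces\`aro-average bound $O(c^{2}/(\alpha\delta))=O(n)$ on $E[G]$ and hence on $E[L]$. Your extraction of the queue bound via $L_{k}(r_{k})\ge \epsilon(r_{k}-B)$ for $r_{k}\ge B$ is equivalent to the paper's use of $W(\mathbf{r})=\sum_{k}(B-r_{k}^{*})(r_{k}-r_{k}^{*})\le L(\mathbf{r})$ together with $B-r_{k}^{*}\ge\epsilon$, and your frame-to-slot correction matches the paper's additive $T=O(n\log n)$ term.
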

\begin{proof}
Denote $\bar{L}:=\max_{{\bf r}\in\tilde{{\cal B}}}L({\bf r})$. Then if $L({\bf r})>\bar{L}$, we have ${\bf
r}\notin\tilde{{\cal B}}$. Define\[ G({\bf r}):=[L({\bf r})-\bar{L}]_{+}.\]

Note that $|r_{k}[j+1]-r_{k}[j]|\le\alpha,\forall k$. Also, (\ref{eq:L_derivative}) implies that
$|\frac{\partial L({\bf r})}{\partial r_{k}}|\le d_{max}$. Then
\begin{eqnarray*}
|L({\bf r}[j+1])-L({\bf r}[j])| & \le & n\alpha d_{max} \\
 & = & \delta d_{max}:=c.\end{eqnarray*}

Case 1: If $L({\bf r}[j])-\bar{L}>c$, then $L({\bf r}[j+1])-\bar{L}>0$, and
\begin{eqnarray*}
 &  & G^{2}({\bf r}[j+1])-G^{2}({\bf r}[j])\\
 & = & [L({\bf r}[j+1])-\bar{L}]^{2}-[L({\bf r}[j])-\bar{L}]^{2}\\
 & = & 2[L({\bf r}[j])-\bar{L}][L({\bf r}[j+1])-L({\bf r}[j])]+\\
 &  & [L({\bf r}[j+1])-L({\bf r}[j])]^{2}\\
 & \le & 2G({\bf r}[j])[L({\bf r}[j+1])-L({\bf r}[j])]+c^{2}.
\end{eqnarray*}
Therefore
\begin{eqnarray*}
E_{j}[G^{2}({\bf r}[j+1])-G^{2}({\bf r}[j])] & \le & 2G({\bf r}[j])\Delta[j]+c^{2}\\
 & \le & -G({\bf r}[j])\alpha\delta/2+c^{2}.
 \end{eqnarray*}

Case 2: If $G({\bf r}[j])\le c$, then $0\le G({\bf r}[j+1])\le G({\bf r}[j])+c$. Therefore\begin{eqnarray*}
 &  & E_{j}[G^{2}({\bf r}[j+1])-G^{2}({\bf r}[j])]\\
 & \le & 2cG({\bf r}[j])+c^{2}\le3c^{2}\\
 & \le & -G({\bf r}[j])\alpha\delta/2+c\alpha\delta/2+3c^{2}.\end{eqnarray*}

Combining case 1 and 2, we have\[ E_{j}[G^{2}({\bf r}[j+1])-G^{2}({\bf r}[j])]\le-G({\bf
r}[j])\alpha\delta/2+c\alpha\delta/2+3c^{2}.\]

Taking expectations on both sides yields\[ E[G^{2}({\bf r}[j+1])-G^{2}({\bf r}[j])]\le-E[G({\bf
r}[j])]\alpha\delta/2+c\alpha\delta/2+3c^{2}.\]

Summing the above inequality from $j=0$ to $j=J-1$, and dividing both sides by $J$, we have
\begin{eqnarray*}
& & E[G^{2}({\bf r}[J])-G^{2}({\bf r}[0])]/J \\
& \le & -(\alpha\delta/2)\sum_{j=0}^{J-1}E[G({\bf r}[j])]/J+c\alpha\delta/2+3c^{2}.
\end{eqnarray*}

Therefore, \[ \limsup_{J\rightarrow\infty}\sum_{j=0}^{J-1}E(G({\bf
r}[j]))/J\le\frac{c\alpha\delta/2+3c^{2}}{\alpha\delta/2}=c+6n\frac{c^{2}}{\delta^{2}}=O(n).\]

Note that \[ W({\bf r}):=\sum_{k}(B-r_{k}^{*})(r_{k}-r_{k}^{*})\le L({\bf r})\le G({\bf r})+\bar{L}.\]

So\[ \limsup_{J\rightarrow\infty}\sum_{j=0}^{J-1}E(W({\bf r}[j]))/J\le O(n)+\bar{L}=O(n)\] since
$\bar{L}=\sum_{k=1}^{n} (B-r_k^{*})^2 = O(n)$.

In view of (\ref{eq:r_update}), we then have
\[ \limsup_{J\rightarrow\infty}\sum_{j=0}^{J-1}\sum_{k}(B-r_{k}^{*})E(Q_{k}[j])/J=O(\frac{T}{\alpha}n)=O(n^{3}\log n).\] Since $B-r_{k}^{*} \ge B-B_{\epsilon}=\epsilon,\forall k$, we have
\begin{eqnarray*}
 &  & \limsup_{J\rightarrow\infty}\sum_{j=0}^{J-1}\sum_{k}E(Q_{k}[j])/J\\
 & \le & \frac{1}{\epsilon}\limsup_{J\rightarrow\infty}\sum_{j=0}^{J-1}\sum_{k}(B-r_{k}^{*})Q_{k}[j]/J=O(n^{3}\log n).\end{eqnarray*}

Since in a slot each queue is increased at most by 1,
$\bar{Q}_{k}\le\limsup_{J\rightarrow\infty}\sum_{j=0}^{J-1}E(Q_{k}[j])/J+T$ where $T=O(n \log n)$. Therefore
(\ref{eq:average-q}) holds.
\end{proof}

\section{\label{s.complete-graph}Complete Interference Graphs}

We have shown that in general interference graphs, PGD-CSMA can achieve polynomial queue lengths when the
arrival rate vector lies in $\frac{1}{\Delta}$ of the capacity region. In this section, we will show that the
polynomial-delay region can be further improved for certain interference graphs. We only consider
fixed-parameter PGD-CSMA for simplicity.

Consider a wireless local area network where every mobile station maintains a $1$-hop link to the Access
Point. This is an important scenario in practice, where all links conflict with each other and the
interference graph is a complete graph.

In such a complete interference graph with $n$ links (note that this graph does not have a bounded degree as
$n\rightarrow \infty$), suppose the fugacity of link $i$ is $\lambda_i$. Under PGD-CSMA with fugacities
$\boldsymbol{\lambda}$, the steady-state probability that link $i$ is active is simply
$$\pi(\textbf{e}_i) = \frac{\lambda_i}{Z}=s_i,$$
where $\textbf{e}_i$ is the schedule with only link $i$ active, $Z=1+\sum_{j=1}^{n}\lambda_j$, and $s_i$ is
the mean service rate of link $i$. Also, denote $\lambda_{max}:=\max_{i}\lambda_{i}$.

To determine the decision schedule in each time slot, similar to the scheme in \cite{nisri09}, each link
independently sends an INTENT message with probability $1/n$. If the transmitter of link $i$ sends the packet
and the packet is successfully received (indicated by an acknowledgement from the receiver), then link $i$ is
included in the decision schedule. So, the decision schedule which only includes link $i$ is chosen with
probability $\frac{1}{n}(1-\frac{1}{n})^{n-1}:=\frac{c_{n}}{n}$ where $c_{n}:=(1-\frac{1}{n})^{n-1}\ge
c_{min}:=0.2,\forall n$; and the decision schedule which includes no links is chosen with probability
$1-c_{n}$.

\begin{lemma}
\label{lem:mix_complete_graph} \emph{Let $\mu_{x',t}$ be the distribution of the transmission schedule in
slot $t$ with any initial schedule $x'$, we have}
\begin{equation}
||\mu_{x',t}-\pi||_{var}\le\gamma^{t}\label{eq:mixing},
\end{equation}
\emph{where} $\gamma=1-\frac{c_{min}}{n\cdot(1+\lambda_{max})}.$
\end{lemma}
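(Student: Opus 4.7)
The plan is to prove the one-step contraction $\|\mu_{x',t+1}-\pi\|_{var}\le\gamma\cdot\|\mu_{x',t}-\pi\|_{var}$, which iterated $t$ times gives (\ref{eq:mixing}) because $\|\mu_{x',0}-\pi\|_{var}\le 1$. Concretely, I will bound the Dobrushin ergodicity coefficient
\[
\tau(P):=\max_{x,y\in\Omega}\|P(x,\cdot)-P(y,\cdot)\|_{var}
\]
of the PGD-CSMA transition matrix $P$ on the complete interference graph, and then invoke the standard inequality $\|\mu P-\nu P\|_{var}\le\tau(P)\|\mu-\nu\|_{var}$ (with $\nu=\pi$) to conclude. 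So the goal reduces to showing $\tau(P)\le\gamma$.

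The first step is to exploit the completeness of the interference graph to collapse the state space to $\Omega=\{\mathbf{0},e_{1},\dots,e_{n}\}$ (since at most one link can be active) and to write the rows of $P$ in closed form. Under the INTENT signaling, the decision schedule equals the singleton $\{k\}$ with probability $c_{n}/n$ for each $k$ and is empty otherwise. Unrolling the Parallel Glauber Dynamics update rules then yields $P(e_{i},\mathbf{0})=\beta_{i}:=c_{n}/[n(1+\lambda_{i})]$, $P(e_{i},e_{i})=1-\beta_{i}$, $P(e_{i},e_{j})=0$ for $j\ne i$, and from $\mathbf{0}$, $P(\mathbf{0},e_{i})=\alpha_{i}:=c_{n}\lambda_{i}/[n(1+\lambda_{i})]$ with $P(\mathbf{0},\mathbf{0})=1-\sum_{i}\alpha_{i}$. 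The key algebraic identity I will exploit is $\alpha_{i}+\beta_{i}=c_{n}/n$ for every $i$.

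The second step is to compute $\|P(x,\cdot)-P(y,\cdot)\|_{var}$ in the only two non-trivial cases. For $(x,y)=(\mathbf{0},e_{k})$, after verifying $\sum_{i}\alpha_{i}+\beta_{k}<1$ (which follows from the uniform bound $c_{n}\le 1/2$ for $n\ge 2$), the signed measure $P(\mathbf{0},\cdot)-P(e_{k},\cdot)$ is positive exactly on $\{\mathbf{0}\}\cup\{e_{i}:i\ne k\}$, so summing its positive part gives $\|P(\mathbf{0},\cdot)-P(e_{k},\cdot)\|_{var}=1-(\alpha_{k}+\beta_{k})=1-c_{n}/n$. For $(x,y)=(e_{i},e_{j})$ with $i\ne j$, both rows of $P$ are supported on $\{\mathbf{0},e_{i},e_{j}\}$ and a similar computation yields $\|P(e_{i},\cdot)-P(e_{j},\cdot)\|_{var}=1-\min(\beta_{i},\beta_{j})$, whose worst case over pairs is $1-\min_{i}\beta_{i}=1-c_{n}/[n(1+\lambda_{\max})]$. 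Since $1+\lambda_{\max}\ge 1$, this case dominates the $(\mathbf{0},e_{k})$ case, and combining with $c_{n}\ge c_{\min}$ gives $\tau(P)\le 1-c_{n}/[n(1+\lambda_{\max})]\le\gamma$, as required.

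The main obstacle I foresee is not conceptual but rather the careful bookkeeping in the $(\mathbf{0},e_{k})$ calculation: one must verify the sign structure of the signed measure $P(\mathbf{0},\cdot)-P(e_{k},\cdot)$ so as to drop absolute values cleanly, which hinges on the uniform bound $c_{n}\le 1/2$. An equivalent but more constructive route, closer in spirit to the coupling arguments used in Section \ref{s.proof-fast}, would be to iterate the one-step maximal coupling of $P(\sigma(t),\cdot)$ and $P(\eta(t),\cdot)$ on a pair $(\sigma(t),\eta(t))$ with $\eta(0)\sim\pi$, and to apply the coupling inequality after bounding $\Pr(\sigma(t+1)\ne\eta(t+1)\mid\sigma(t),\eta(t))\le\gamma\cdot\mathbf{1}[\sigma(t)\ne\eta(t)]$ via the same pairwise variation-distance computations.
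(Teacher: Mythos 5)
Your proposal is correct, and the computations check out: on the complete interference graph the chain does collapse to the $(n+1)$ states $\{\mathbf{0},e_1,\dots,e_n\}$, your closed-form rows of $P$ follow from unrolling the update rules with the singleton/empty decision-schedule distribution, the identity $\alpha_i+\beta_i=c_n/n$ is right, and the two row-distance computations (with the sign check via $c_n\le 1/2$ for $n\ge 2$) correctly identify $\tau(P)=1-\min_i\beta_i=1-\frac{c_n}{n(1+\lambda_{\max})}\le\gamma$, after which Dobrushin contraction gives (\ref{eq:mixing}). The route is genuinely different from the paper's, which never writes down $P$ explicitly: it constructs an explicit coupling $(X(t),Y(t))$ with $Y(0)\sim\pi$ --- in the case $x=e_{i_1}$, $y=e_{i_2}$ it pairs the decision schedules by swapping $i_1$ and $i_2$ and couples the turn-off coins monotonically, and in the remaining cases it uses identical decision schedules and coin tosses --- then lower-bounds the one-step meeting probability by $\bar c=c_{\min}/[n(1+\lambda_{\max})]$ and applies the coupling lemma to get $(1-\bar c)^t$. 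The two are two sides of the same coin (the Dobrushin coefficient is exactly one minus the optimal one-step coupling probability, so your calculation implicitly exhibits the maximal coupling), but they buy different things: your version is more elementary and self-contained, yields the exact worst-case row distance rather than a bound, and makes transparent that the binding pair is $(e_i,e_j)$ with $\lambda_j=\lambda_{\max}$; the paper's explicit coupling avoids enumerating the transition matrix, which is what scales to general interference graphs and connects to the path-coupling machinery of Section~\ref{s.proof-fast}. Your closing remark correctly identifies this equivalence.
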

\begin{proof}
Consider two copies of the Markov chain $X(t)$ and $Y(t)$. We will construct a coupling $\{X(t),Y(t)\}$, with
$Y(0)$ chosen from the stationary distribution $\pi$. With this coupling, we show that for any $x,y$,
\begin{eqnarray}
& & P(X(t+1)=Y(t+1) | X(t)=x, Y(t)=y) \nonumber \\
& \ge & \bar{c}:=c_{min}/[n\cdot(1+\lambda_{max})].\label{eq:one-step}
\end{eqnarray}

Since there is at most one link active in a given slot, for convenience, we write the state as $i$ if link
$i\in\{1,2,\dots,n\}$ is active in the slot, and 0 if no link is active.

If $x=y$, then the two Markov chains have already coupled in slot $t$, so that $P(X(t+1)=Y(t+1) | X(t)=x,
Y(t)=y)=1$. If $x \ne y$, there are three cases.

(i) $x=i_{1}$ and $y=i_{2}$ where $i_{1}\ne i_{2}$ and $i_{1},i_{2}\ne0$.

According to the Glauber dynamics defined above, with probability $c_{n}/n$, one link is \emph{selected} in
the \emph{decision schedule}, and with probability $1-c_{n}$ no link is selected. We define the following
\emph{coupling}.

If link $i_{1}$ is selected in the process $X(\cdot)$, then select link $i_{2}$ in the process $Y(\cdot)$.
WLOG, assume that $\lambda_{i_{1}}\ge\lambda_{i_{2}}$. Then, turn off link $i_{1}$ in $X(\cdot)$ w. p.
$1/(1+\lambda_{i_{1}})$. If link $i_{1}$ is turned off, then also turn off link $i_{2}$ in $Y(\cdot)$. If
link $i_{1}$ is not turned off, turn off link $i_{2}$ w. p.
$[1/(1+\lambda_{i_{2}})-1/(1+\lambda_{i_{1}})]/[\lambda_{i_{1}}/(1+\lambda_{i_{1}})]$. Then it is easy to see
that link $i_{2}$ is turned off w. p. $1/(1+\lambda_{i_{2}})$.

If link $i_{2}$ is selected in $X(\cdot)$, then select link $i_{1}$ in $Y(\cdot)$. Clearly, after the
selection, they cannot be turned on.

If a link other than $i_{1}$ and $i_{2}$ is selected in $X(\cdot)$, then select the same link in $Y(\cdot)$.
Also, any link selected must remain off at time 1.

If no link is selected in $X(\cdot)$, then also select no link in $Y(\cdot)$.

Therefore, \begin{eqnarray*}
& & P(X(t+1)=Y(t+1) | X(t)=x, Y(t)=y) \\
 & \ge & P(X(t+1)=Y(t+1)=0 | X(t)=x, Y(t)=y)\\
 & \ge & c_{min}/[n\cdot(1+\lambda_{i_{1}})]\\
 & \ge & c_{min}/[n\cdot(1+\lambda_{max})] = \bar{c}.\end{eqnarray*}

(ii) $x=i$ and $y=0$ where $i\ne0$.

In $X(\cdot)$ and $Y(\cdot)$, we choose the same link (in the decision schedule), and use the same coin toss
to decide whether the chosen link should try to turn on or off. Therefore, w. p. $c_{n}/n$, link $i$ is
chosen, and then $X(t+1)=Y(t+1)$. So,
\begin{eqnarray*}
& & P(X(t+1)=Y(t+1) | X(t)=x, Y(t)=y) \\
& \ge & c_{n}/n\ge c_{min}/n\ge \bar{c}.
\end{eqnarray*}

(iii) $x=0$ and $y=i$ where $i\ne0$.

This is symmetric to case (ii).

Therefore, in any case, (\ref{eq:one-step}) holds, and we have
$$P(X(t+1) \ne Y(t+1) | X(t)=x, Y(t)=y)\le 1-\bar{c},\forall x,y,t.$$
This implies that
\[ P(X(t)\ne Y(t)|X(0)=x')\le (1-\bar{c})^{t},\forall x'.\]

By the well-known coupling lemma (e.g., \cite{bubdye97}),
$$||\mu_{x',t}-\pi||_{var}\le P(X(t)\ne Y(t) | X(0)=x')$$
for any initial state $x',$ hence proving (\ref{eq:mixing}).
\end{proof}

\begin{theorem}
\emph{Given any $\rho<1$ (which is independent of $n$), PGD-CSMA can support ${\boldsymbol
\nu}\in\rho\Lambda$ with a mixing time of $O(n)$.}
\end{theorem}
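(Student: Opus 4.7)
The plan is to exploit the explicit form of the capacity region of a complete interference graph together with the mixing bound in Lemma~\ref{lem:mix_complete_graph}. For a complete conflict graph, $\Omega$ consists of the empty schedule and the singletons $\{i\}$, so $\Lambda = \{\boldsymbol\nu \ge 0 : \sum_i \nu_i \le 1\}$. Hence $\boldsymbol\nu \in \rho\Lambda$ means $\sum_i \nu_i \le \rho$ and in particular $\nu_i \le \rho$ for all $i$.

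The first step is to exhibit fugacities $\boldsymbol\lambda$ that realize $s_i = \nu_i$. Using the product form $\pi(\textbf e_i) = \lambda_i / Z$ with $Z = 1 + \sum_j \lambda_j$, the equations $s_i = \nu_i$ admit the closed-form solution $\lambda_i = \nu_i/(1-\sum_j \nu_j)$, obtained by summing $\lambda_i = \nu_i Z$ to isolate $Z = 1/(1-\sum_j \nu_j)$. Since $\sum_j \nu_j \le \rho < 1$ and each $\nu_i \le \rho$, this gives the uniform bound
\[
\lambda_{\max} \;\le\; \frac{\rho}{1-\rho},
\]
which is a constant independent of $n$.

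The second step is to plug this into Lemma~\ref{lem:mix_complete_graph}. That lemma yields $\|\mu_{x',t}-\pi\|_{var} \le \gamma^t$ with $\gamma = 1 - c_{min}/[n(1+\lambda_{\max})]$, so requiring $\gamma^t \le 1/e$ gives
\[
T_{\text{mix}} \;\le\; \Big\lceil \frac{1}{1-\gamma} \Big\rceil \;=\; \Big\lceil \frac{n(1+\lambda_{\max})}{c_{min}} \Big\rceil \;\le\; \Big\lceil \frac{n}{c_{min}(1-\rho)} \Big\rceil \;=\; O(n),
\]
where the last bound uses $1+\lambda_{\max} \le 1/(1-\rho)$, which is a constant as $\rho$ is fixed and independent of $n$.

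There is essentially no obstacle beyond bookkeeping: the only minor subtlety is that the statement allows $\boldsymbol\nu \in \rho\Lambda$ (boundary included) rather than the open region $\rho\Lambda^o$, but since $\rho < 1$ we still have $\sum_j \nu_j < 1$, so $Z$ is finite and the fugacities are well defined. Everything else is a direct substitution into the already-proved mixing lemma.
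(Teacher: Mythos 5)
Your mixing-time computation is correct and follows exactly the paper's route: identify $\Lambda$ for the complete graph, bound $\lambda_{\max}$ by a constant independent of $n$, and substitute into Lemma~\ref{lem:mix_complete_graph}. The gap is in what ``support'' means. The theorem is a throughput/stability claim, not merely a mixing-time claim: PGD-CSMA must stabilize the queues at arrival rate $\boldsymbol\nu$. You choose fugacities so that $s_i=\nu_i$ \emph{exactly}. With zero slack between service and arrival rates, the queue processes have zero drift and the Foster--Lyapunov argument used throughout Section~\ref{sec:queuelength} (which requires $\epsilon<\min_i(s_i-\nu_i)$, hence $s_i>\nu_i$ strictly) does not apply; the queues are not positive recurrent. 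This is precisely why the paper first pads the rates, setting $\boldsymbol\nu'=\boldsymbol\nu+\frac{1-\rho}{2n}\mathbf 1$, verifies $\sum_j\nu_j'\le\frac{1+\rho}{2}<1$, and then picks fugacities achieving $s_i=\nu_i'>\nu_i$. The padding costs almost nothing in the fugacity bound ($\lambda_{\max}\le\frac{1+\rho}{1-\rho}$ instead of your $\frac{\rho}{1-\rho}$, still a constant), so the $O(n)$ mixing time survives, but it is the step that makes the word ``support'' true.

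A secondary, minor point: your closed form $\lambda_i=\nu_i/(1-\sum_j\nu_j)$ and the resulting bound $1+\lambda_{\max}\le 1/(1-\rho)$ are fine as algebra, and your observation about boundary points of $\rho\Lambda$ is handled automatically once you pad, since $\boldsymbol\nu'$ lies in $\Lambda^{o}$ by construction. So the fix is small -- insert the $\frac{1-\rho}{2n}$ perturbation before solving for the fugacities -- but as written the proof establishes only fast mixing, not that the arrival rates are supported.
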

\begin{proof}
For the complete interference graph,
$$\Lambda=\{{\boldsymbol
\nu}|\sum_{j=1}^{n}\nu_{j}\le1,\nu_{j}\ge0,\forall j\}.$$ Therefore, since ${\boldsymbol \nu}\in\rho\Lambda$,
we have $\sum_{j=1}^{n}\nu_{j}\le\rho$. So, ${\boldsymbol\nu}' := {\boldsymbol \nu}+\frac{1-\rho}{2n} {\bf
1}$ satisfies that $\sum_{j=1}^{n}\nu'_{j}\le \frac{1+\rho}{2}<1$. As a result, ${\boldsymbol \nu}' \in
\frac{1+\rho}{2}\Lambda \subset \Lambda^{o}$.

Let ${\boldsymbol \lambda}$ be the vector of fugacities such that $s_{i}$ under PGD-CSMA is equal to
 $\nu'_{i} > \nu_{i}$ at each link $i$. We have $s_{i}=\frac{\lambda_{i}}{1+\sum_{j=1}^{n}\lambda_{j}}=\nu'_{i},\forall i$ and\[
\sum_{j=1}^{n}s_{j}=\frac{\sum_{j=1}^{n}\lambda_{j}}{1+\sum_{j=1}^{n}\lambda_{j}}=\sum_{j=1}^{n}\nu'_{j} \le
\frac{1+\rho}{2}.\]

Therefore, $\lambda_{max}\le\sum_{j=1}^{n}\lambda_{j}\le\frac{1+\rho}{1-\rho}$. Using Lemma
\ref{lem:mix_complete_graph} and (\ref{eq:mixingbound}), we know that the mixing time \[
T_{mix}\le\frac{1}{1-\gamma}=\frac{n\cdot(1+\lambda_{max})}{c_{min}}\le\frac{2n}{c_{min}(1-\rho)}=O(n).\]
\end{proof}

\section{Conclusion}  \label{sec:conclusion}

In this paper, we have shown that Glauber dynamics based CSMA can result in queue lengths that only grow
polynomially in the network size in bounded degree graphs if the arrival rates lie within a certain fraction
of the capacity region. This establishes a positive result in contrast to previous results which showed that
polynomial queue lengths are not possible if the interference graph of the network can be arbitrary
\cite{shatsetsi09}. To establish our results, we use Markov chain coupling theory to estimate the mixing time
of parallel Glauber dynamics. It is interesting that our upper bound on the queue lengths is larger (in an
order sense) for the dynamic-parameter algorithm compared to the fixed-parameter algorithm. However, it is
unclear whether this is an artifact of our bounding techniques or if it is an inherent penalty due to the
time required for adaptation. It would be interesting to explore this issue in future work. We are also
interested to study  the throughput and delay performance of CSMA scheduling algorithms in other important
classes of interference graphs, and design enhanced or new algorithms to further improve the performance.

\appendix

\subsection{\label{conditions}Other conditions for fast mixing of parallel Glauber dynamics}

\begin{corollary} \label{coro:c1}
\emph{Let $m=\min_{v\in V}\frac{1+\lambda_v}{q_v}$, $M=\max_{v\in V}\frac{1+\lambda_v}{q_v},$ and
$\xi=\frac{M}{m}.$ If}
\begin{eqnarray}
\theta \triangleq \min_{v\in V}\left\{1+\lambda_v-\sum_{w\in\mathcal N_v}\lambda_w\right\} &
> & 0,
\end{eqnarray}
\emph{then we have}
\begin{eqnarray}
T_{\textit{mix}} & \leq & \bar{T}_{mix}  = \Big \lceil \frac{M}{\theta}\log(n\xi e) \Big\rceil.
\end{eqnarray}
\end{corollary}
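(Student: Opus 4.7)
The plan is to apply Theorem \ref{thm:main} with a carefully chosen weight function $f(v)$, and simply verify that the quantities $m$, $M$, $\xi$, and $\theta$ appearing in the corollary match those produced by the theorem.

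Concretely, I would set $f(v) = \frac{1+\lambda_v}{q_v}$ for every vertex $v\in V$. This is a positive weight function, so Theorem \ref{thm:main} applies. With this choice, the quantities $m=\min_v f(v)$, $M=\max_v f(v)$ and $\xi=M/m$ agree on the nose with the corresponding quantities in the corollary statement.

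Next I would compute the contraction parameter $\theta$ from (\ref{eq:theta}): for any $v$,
\[
q_v f(v) - \sum_{w\in\mathcal N_v} q_w\,\frac{\lambda_w}{1+\lambda_w}\,f(w) = (1+\lambda_v) - \sum_{w\in\mathcal N_v}\lambda_w,
\]
because the factors $q_v$ and $1+\lambda_w$ telescope cleanly against $f(v)$ and $f(w)$. Taking the minimum over $v$ yields exactly the $\theta$ defined in the corollary, which is positive by hypothesis.

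Since both the hypothesis $\theta>0$ of Theorem \ref{thm:main} and the definitions of $m,M,\xi,\theta$ used in its conclusion match the corollary's quantities, the mixing time bound $T_{\text{mix}}\le \lceil \frac{M}{\theta}\log(n\xi e)\rceil$ transfers directly. There is no real obstacle here, the corollary is a specialization of the theorem via a clever choice of $f$; the only small thing to double-check is that $f(v)>0$ (which holds since both $1+\lambda_v>0$ and $q_v>0$ by the irreducibility condition of Theorem \ref{theorem:productform}), so no edge cases arise.
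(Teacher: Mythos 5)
Your proposal is correct and is exactly the paper's argument: the paper's proof of Corollary~\ref{coro:c1} consists precisely of choosing $f(v)=\frac{1+\lambda_v}{q_v}$ in Theorem~\ref{thm:main}, and your verification that $q_v f(v)-\sum_{w\in\mathcal N_v}q_w\frac{\lambda_w}{1+\lambda_w}f(w)=(1+\lambda_v)-\sum_{w\in\mathcal N_v}\lambda_w$ is the (omitted but intended) computation. No issues.
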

\begin{proof}
Choose $f(v)=\frac{1+\lambda_v}{q_v}$, $\forall v\in V$ in Theorem \ref{thm:main}.
\end{proof}

\begin{corollary} \label{coro:c2}
\emph{Let $q_{min}=\min_{v\in V}q_v$, $q_{max}=\max_{v\in V}q_v,$ and $\xi=\frac{q_{max}}{q_{min}}.$ If}
\begin{eqnarray}
b \triangleq \max_{v\in V}\sum_{w\in\mathcal N_v}\frac{\lambda_w}{1+\lambda_w} &<& 1,
\end{eqnarray}
\emph{then we have}
\begin{eqnarray} \label{eq:mixingPGB2}
T_{\textit{mix}} & \leq & \bar{T}_{\textit{mix}} = \Big\lceil \frac{\log\big(n\xi e\big)}{q_{min}(1-b)}
\Big\rceil.
\end{eqnarray}
\end{corollary}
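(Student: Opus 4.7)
The plan is to apply Theorem \ref{thm:main} directly with the weight function $f(v) = 1/q_v$ for every $v \in V$. This is the natural choice because we want the extrema of $f$ to be controlled by the extrema of $q_v$, so that the ratio $\xi = M/m$ in Theorem \ref{thm:main} matches $q_{max}/q_{min}$ in the statement of the corollary.

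With $f(v) = 1/q_v$, the two extrema become $m = 1/q_{max}$ and $M = 1/q_{min}$, and hence $M/m = q_{max}/q_{min} = \xi$ exactly as required. The key term $\theta$ from (\ref{eq:theta}) simplifies nicely: the first piece $q_v f(v)$ collapses to $1$, while the summation becomes
\[
\sum_{w\in\mathcal N_v} q_w \frac{\lambda_w}{1+\lambda_w} f(w) = \sum_{w \in \mathcal N_v}\frac{\lambda_w}{1+\lambda_w},
\]
because the $q_w$ factors cancel with $f(w)=1/q_w$. Taking the minimum over $v$ of $1 - \sum_{w\in\mathcal N_v}\lambda_w/(1+\lambda_w)$ yields $\theta = 1 - b$, which is strictly positive under the hypothesis $b < 1$.

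Plugging into the bound from Theorem \ref{thm:main} then gives
\[
\frac{M}{\theta} = \frac{1/q_{min}}{1-b} = \frac{1}{q_{min}(1-b)},
\]
so $\bar T_{\textit{mix}} = \lceil \frac{M}{\theta}\log(n\xi e)\rceil = \lceil \frac{\log(n\xi e)}{q_{min}(1-b)}\rceil$, matching (\ref{eq:mixingPGB2}). There is no real obstacle here beyond recognizing the right weight function; the whole argument is a substitution. The only thing to verify is that all $q_v > 0$ (so that $f$ is well-defined and positive), which is guaranteed by the irreducibility condition in Theorem \ref{theorem:productform}.
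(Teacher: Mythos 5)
Your proposal is correct and matches the paper's own proof, which likewise instantiates Theorem~\ref{thm:main} with the weight function $f(v)=1/q_v$; you have simply spelled out the substitution that the paper leaves implicit. No issues.
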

\begin{proof}
Choose $f(v)=\frac{1}{q_v}$, $\forall v\in V$ in Theorem \ref{thm:main}.
\end{proof}

\begin{remark}
\emph{Note that the condition $\lambda_v<\frac{1}{d_v-1}$ for all $v\in V$ in Corollary~\ref{coro:c3} might
be very different from $b=\max_{v\in V}\sum_{w\in\mathcal N_v}\frac{\lambda_w}{1+\lambda_w}<1$ in
Corollary~\ref{coro:c2}, e.g., in graphs of star topology.}
\end{remark}

\subsection{\label{proof:MRF}Proof of (\ref{eq:MRF})}

Let $\mathcal{A}:=\{\sigma\in\Omega|\sigma_{i}=1,\sigma_{j}=0,\forall j\in\mathcal{N}_{i}\}$. Then
$s_{i}=\sum_{\sigma\in\mathcal{A}}\pi(\sigma)$. Let
$\mathcal{B}:=\{\sigma\in\Omega|\sigma_{i}=0,\sigma_{j}=0,\forall j\in\mathcal{N}_{i}\}$. Clearly,
$\mathcal{A}\cap\mathcal{B}=\emptyset$. Define $\mathcal{C}=\{\sigma+{\bf e}_{i}|\sigma\in\mathcal{B}\}$,
where ${\bf e}_{i}$ is the $n$-dimensional vector whose $i$-th element is 1 and all other elements are 0. We
claim that $\mathcal{A}=\mathcal{C}$. (Indeed, any $\sigma'\in\mathcal{A}$ can be written as
$\sigma'=\sigma+{\bf e}_{i}$ for some $\sigma\in\mathcal{B}$. So $\sigma'\in\mathcal{C}$. Also, any
$\sigma'\in\mathcal{C}$ is in $\mathcal{A}$.) Therefore,
$\sum_{\sigma'\in\mathcal{A}}\pi(\sigma')=\sum_{\sigma\in\mathcal{B}}\pi(\sigma+{\bf e}_{i})$. By
(\ref{eq:productformPGD}), we have $\pi(\sigma+{\bf e}_{i})=\lambda_{i}\pi(\sigma)$. So
$s_{i}=\sum_{\sigma'\in\mathcal{A}}\pi(\sigma')=\lambda_{i}\sum_{\sigma\in\mathcal{B}}\pi(\sigma)$. As a
result, $p_{i,0}=\sum_{\sigma:\sigma_{j}=0,\forall
j\in\mathcal{N}_{i}}\pi(\sigma)=\sum_{\sigma'\in\mathcal{A}}\pi(\sigma')+\sum_{\sigma\in\mathcal{B}}\pi(\sigma)=(1+\frac{1}{\lambda_{i}})s_{i}$,
proving (\ref{eq:MRF}).

% that's all folks
\end{document}